\documentclass{article}


\usepackage[preprint]{neurips_2025}




\usepackage[utf8]{inputenc} 
\usepackage[T1]{fontenc}    
\usepackage{hyperref}       
\usepackage{url}            
\usepackage{booktabs}       
\usepackage{amsfonts}       
\usepackage{nicefrac}       
\usepackage{microtype}      
\usepackage{xcolor}         
\usepackage{graphicx}

\usepackage{amssymb}
\usepackage{amsmath}
\usepackage{amsthm}
\usepackage[capitalize,noabbrev]{cleveref}
\usepackage{wrapfig}

\newtheorem{theorem}{Theorem}[section]
\newtheorem{lemma}[theorem]{Lemma}
\usepackage{algorithmic}

\usepackage{floatrow}


\usepackage{mathtools}
\usepackage{algorithm2e}
\usepackage{physics}
\newcommand{\changefont}[1]{{\fontfamily{qcr}\selectfont#1}}

\newcommand{\p}{\boldsymbol{\Pi}}
\newcommand{\x}{\boldsymbol{x}}
\newcommand{\U}{\boldsymbol{u}}
\newcommand{\Z}{\boldsymbol{z}}

\newcommand{\expect}[1]{{\mathbb{E}[#1]}}

\title{Metropolis Adjusted Microcanonical Hamiltonian Monte Carlo}

\theoremstyle{plain}

\theoremstyle{definition}

\theoremstyle{remark}

\usepackage[textsize=tiny]{todonotes}

%

\author{
Jakob Robnik \\
Physics Department, \\
University of California at Berkeley, \\
Berkeley, CA 94720, USA\\
\texttt{jakob\_robnik@berkeley.edu}\\
\And
Reuben Cohn-Gordon \\
Physics Department, \\
University of California at Berkeley, \\
Berkeley, CA 94720, USA\\
\texttt{reubenharry@gmail.com}\\
\And
Uro\v{s} Seljak \\
Physics Department, \\
University of California at Berkeley\\ 
and Lawrence Berkeley National Laboratory, Berkeley, \\
Berkeley, CA 94720, USA\\
\texttt{useljak@berkeley.edu}
}

\begin{document}

\maketitle

\begin{abstract}
    Sampling from high dimensional distributions is a computational bottleneck in many scientific applications. Hamiltonian Monte Carlo (HMC), and in particular the No-U-Turn Sampler (NUTS), are widely used, yet they struggle on problems with a very large number of parameters or a complicated geometry. Microcanonical Langevin Monte Carlo (MCLMC) has been recently proposed as an alternative which shows striking gains in efficiency over NUTS, especially for high-dimensional problems. However, it produces biased samples, with a bias that is hard to control in general. We introduce the \emph{Metropolis-Adjusted Microcanonical sampler} (MAMS), which relies on the same dynamics as MCLMC, but introduces a Metropolis-Hastings step and thus produces asymptotically unbiased samples. We develop an automated tuning scheme for the hyperparameters of the algorithm, making it applicable out of the box. We demonstrate that MAMS outperforms NUTS across the board on benchmark problems of varying complexity and dimensionality, achieving up to a factor of seven speedup.
\end{abstract}

\section{Introduction}

Drawing samples from a given probability density 
$p(\x)$, for $\x \in \mathbb{R}^d$, has applications in a wide range of scientific disciplines, from Bayesian inference for statistics \citep{strumbelj_past_2024, carpenter_stan_2017} to biology \citep{gelman1996markov}, statistical physics \citep{janke2008monte}, quantum mechanics \citep{gattringer_quantum_2010} and cosmology \citep{campagne_jax-cosmo_2023}. 
From the perspective of a practitioner in these fields, what is often desirable is a \emph{black-box algorithm}, in the sense of taking as input an unnormalized density and returning samples from the corresponding distribution. An important special case is where the density is differentiable (either analytically or by automatic differentiation \citep{griewank_evaluating_2008}).

Markov Chain Monte Carlo \citep{metropolis_equation_1953, hastings_monte_1970} is a broad class of methods suited to this task, which construct a Markov Chain $\{\x_i\}_{i=1}^n$ such that $\x_j$ are samples from the target distribution $p$. When the density's gradient is available, Hamiltonian Monte Carlo (HMC)  \citep{HMCDuane, HMCneal} is a leading method. In particular, the No-U-Turn sampler is a black-box version of HMC, where users do not need to manually select the hyperparameters. It has been implemented in libraries like Stan \citep{carpenter_stan_2017} and Pyro \citep{bingham2019pyro}, serving a wide community of scientists.

NUTS is a powerful method, but there have been many attempts in the past decade to replace it with algorithms that can achive the same accuracy at a lower computational cost. One such method is Microcanonical Langevin Monte Carlo (MCLMC) \citep{robnik_microcanonical_2024}; it replaces the HMC dynamics with a velocity-norm preserving dynamics, resulting in a method that is more stable to large gradients. Benchmarking in cosmology \citep{simon-onfroy_benchmarking_2025}, Bayesian inference \citep{robnik_metropolis_2025}, and field theories \citep{robnik_fluctuation_2024} suggests MCLMC is a promising candidate to replace NUTS as a go-to gradient-based sampler.

The drawback of MCLMC is that it is biased; that is, the samples it produces do not correspond to samples from the true target distribution $p$, but rather to samples from a nearby distribution $\tilde p$. Although this bias is controllable, in many fields, asymptotically unbiased samplers are wanted, limiting the widespread utility of MCLMC. For HMC, this problem is resolved by the use of a Metropolis-Hastings (MH) step \citep{metropolis_equation_1953, hastings_monte_1970}, which accepts or rejects proposed moves of the Markov chain according to an acceptance $\mathrm{min} \big( 1,  e^{-W(\x', \x)}\big)$, where 
\begin{equation} \label{mh ratio def}
    e^{-W(\x', \x)} = \frac{p(\x')}{p(\x)} \frac{q(\x \vert \x')}{q(\x' \vert \x)}.
\end{equation}
But for MCLMC, the corresponding quantity $W$ has not been previously derived. Moreover, an adaptation scheme for choosing the hyperparameters of the algorithm in the MH adjusted case is needed to make the algorithm usable out of the box, so that it can serve the same use cases as NUTS.

\paragraph{Contributions}
In this paper, we derive the acceptance probabilities for microcanonical dynamics with and without Langevin noise in \cref{sec: MCLMC} and \cref{sec: MCHMC}, respectively. Notably, $W$ turns out to be the energy error induced by discretization of the dynamics, as in HMC. 
We term the resulting sampler the \emph{Metropolis-Adjusted Microcanonical Sampler} (MAMS), and develop an automatic adaptation scheme (\cref{sec: adaptation}) to make MAMS applicable without having to specify hyperparameters manually.
We test MAMS on standard benchmarks in \cref{sec: experiments} and find that \emph{it outperforms the state-of-the-art HMC with NUTS tuning by a factor of two at worst, and seven at best}.
The algorithm is implemented in blackjax \citep{blackjax}, applicable out-of-the-box, and is publicly available, together with documentation and tutorials. The code for reproducing numerical experiments is also available\footnote{\url{https://github.com/reubenharry/sampler-benchmarks}}.

\section{Related work} \label{relatedwork}
A wide variety of gradient-based samplers have been proposed, including Metropolis Adjusted Langevin trajectories \citep{riou-durand_metropolis_2023}, generalized HMC \citep{horowitz_generalized_1991, neal_non-reversibly_2020}, the Metropolis Adjusted Langevin Algorithm \citep{grenander_representations_1994}, Deterministic Langevin Monte Carlo \citep{grumitt_deterministic_2022}, Nose-Hoover \citep{evans_nose-hoover_1985, leimkuhler_metropolis_2009}, Riemannian HMC \citep{girolami_riemann_2011}, Magnetic HMC \citep{tripuraneni_magnetic_2017} and the Barker proposal \citep{livingstone_barker_2022}. Some of these methods come with automatic tuning schemes that make them black-box, for example MALT \citep{riou-durand_adaptive_2023}, generalized HMC \citep{hoffman_adaptive-mcmc_2021} and HMC \citep{sountsov_focusing_2022, ChEES}, but these schemes are designed for the many-short-chains MCMC regime \citep{sountsov_running_2024, margossian_nested_2024}, which we do not consider here \footnote{Many-short-chains approach is to run multiple short chains in parallel instead of a single long chain. This regime is interesting when parallel resources are available. However, it is often not applicable, either because one only has a single CPU, or because the parallel resources are needed elsewhere, for example, for parallelizing the model \citep{gattringer_quantum_2010} or for performing multiple sampling tasks \citep{robnik_periodicity_2024}.} 
To our knowledge, NUTS remains the state-of-the-art black-box method for selecting the trajectory length in HMC-like algorithms for the single chain regime.

The dynamics described by \cref{eq: MCHMC dynamics} have been independently proposed several times. In computational chemistry, they were derived by constraining Hamiltonian dynamics to have a fixed velocity norm \citep{tuckerman2001non, minary_algorithms_2003} and termed \emph{isokinetic dynamics}. 
More recently, \cite{steeg_hamiltonian_2021} proposed them as a time-rescaling of Hamiltonian dynamics with non-standard kinetic energy and no momentum resampling. 
\cite{robnik_microcanonical_2024} observed that while HMC aims to reach a stationary distribution known in statistical mechanics as the canonical distribution, it is also possible to target what is known as the \emph{microcanonical} distribution, i.e. the delta function at some level set of the energy. The Hamiltonian $H$ must then be chosen carefully to ensure that the position marginal of the microcanonical distribution is the desired target $p$, and one such choice is the Hamiltonian from \cite{steeg_hamiltonian_2021}. They propose adding velocity resampling every $n$ steps or Langevin noise every step as a method to obtain ergodicity. 

In all of these instances, MCLMC has been proposed without Metropolis-Hastings, and as such, has been proposed as a biased sampler. This is the shortcoming that the present work resolves. We refer to our sampler as using \emph{microcanonical} dynamics in reference to previous work on MCLMC, In contrast, we will refer to the dynamics in standard HMC as \emph{canonical} dynamics. 



\section{Technical Preliminaries}

\paragraph{Hamiltonian Monte Carlo} Let $\mathcal{L}$ be the negative log likelihood of $p$ up to a constant, i.e. $p(\x) = e^{-\mathcal{L}(\x)}/Z$, where $Z=\int e^{-\mathcal{L}(\x)}d \x$. If gradients $\nabla \mathcal{L}(\x)$ are available and are sufficiently smooth, HMC is the gold standard proposal distribution. 
In HMC, each parameter $x_i$ has an associated velocity $u_i$. Parameters and their velocities evolve by a set of differential equations
\begin{equation} \label{eq: HMC dynamics}
    \dot{\x} = \U \qquad \dot{\U} = - \nabla \mathcal L(\x),
\end{equation}
which are designed to have $p_{\mathit{HMC}}(\x, \U) = p(\x) \mathcal{N}(\U)$ as their stationary distribution. Here $\mathcal{N}$ is the standard normal distribution. Note that the marginal distribution $\int p_{\mathit{HMC}}(\x, \U) d\U$ is equal to $p(\x)$, the distribution we want to sample from. 
Thus, sampling from $p(\x)$ reduces to solving \cref{eq: HMC dynamics}.
In practice, the dynamics has to be simulated numerically, by iteratively solving for $\x$ at fixed $\U$ and vice versa, and updating the variables by a time step $\epsilon$ at each iteration. The discrepancy between this approximation and the true dynamics causes the stationary distribution to differ from the target distribution, but this can be corrected by MH; that is, we can use discretized Hamiltonian dynamics as a proposal $q$. Furthermore, to attain ergodicity, the velocities $\U$ must be resampled after every $n$ steps. 

The resulting algorithm has two hyperparameters: the discretization step size of the dynamics $\epsilon$ and the trajectory length between each resampling $L = n \epsilon$. Choosing good values for these two hyperparameters is crucial \citep{beskos_optimal_2013, neal_mcmc_2011, conceptualHMC}, and so a practical sampler must also provide a robust \emph{adaptation scheme} for choosing them. 


 
\paragraph{Microcanonical dynamics} An alternative to HMC is microcanonical dynamics \citep{robnik_microcanonical_2024, tuckerman2001non, minary_algorithms_2003, steeg_hamiltonian_2021} defined by:
\begin{equation} \label{eq: MCHMC dynamics}
    \dot{\x} = \U \qquad \dot{\U} = - (I - \U \U^T) \nabla \mathcal{L}(\x) / (d-1),
\end{equation}
where $\U$ has unit norm which is preserved by the dynamics. The proposed benefit is that the normalization of the velocity makes the dynamics more stable to large gradients.
When integrated exactly, these dynamics have $p_{\mathit{MCLMC}}(\x, \U) = p(\x) \, \mathcal{U}_{S^{d-1}}(\U)$ as a stationary distribution (see \cref{appendix:stationary}), so that the marginal is still $p(\x)$. Here $\mathcal{U}_{S^{d-1}}$ is the uniform distribution on the $d-1$ sphere.
\cite{robnik_microcanonical_2024} propose using these dynamics without MH in order to \emph{approximately} sample from $p(\x)$. In this case, the velocity is partially resampled after every step, and the step size of the discretized dynamics is chosen small enough to limit deviation from the target distribution to acceptable levels. 

While this algorithm works well in practice when the step size is properly tuned, the numerical integration error is not corrected, resulting in an asymptotic bias which is hard to control.  In HMC this is solved by the MH step, which requires calculating $W$, as defined in \cref{mh ratio def}. In this case, $W$ can be easily derived since the integrator is symplectic (volume preserving) and $q(\x \vert \x') / q(\x' \vert \x) = 1$. The integrator used for microcanonical dynamics is not symplectic, so it not immediately clear how to calculate $W$.


\section{Metropolis adjustment for canonical and microcanonical dynamics}
\label{sec: MCHMC}

Both canonical and microcanonical dynamics can be numerically solved by separately solving the differential equation for the parameters $\x$, at fixed velocities $\U$ and vice versa. For a time interval $\epsilon$, we refer to the position update as $A_{\epsilon}(\x, \U)$ and the velocity update as $B_{\epsilon}(\x, \U)$. The solution of the combined dynamics at time $t = n \epsilon$ is then constructed by a composition of these updates:
\begin{equation} \label{eq: proposal}
    \varphi = \mathcal{T}\circ \underbrace{\Phi_{t/n} \circ \Phi_{t/n} \circ \cdots \Phi_{t/n}}_n ,
 \end{equation} 
\begin{equation} \label{eq: BAB}
    \Phi_{\epsilon} = B_{\epsilon/2} \circ A_{\epsilon} \circ B_{\epsilon/2}.
\end{equation}
This is known as the leapfrog (or velocity Verlet) scheme. A final time reversal map 
$
\mathcal{T}(\x, \U) = (\x, \, -\U),
$
is inserted to ensure the map is an involution, i.e.
$\varphi \circ \varphi = id$, where $id$ is the identity map. This is useful in the Metropolis step but does not affect the dynamics in any way, because a full velocity refreshment is performed after the Metropolis step, erasing the effect of time reversal.

Both HMC and MCHMC possess a quantity, which we refer to as energy, which is conserved for exact dynamics, but only approximately conserved by the discrete update from \cref{eq: BAB}. The energy $H$ is composed of two parts, a potential energy $V$ and kinetic energy\footnote{Note that the MAMS dynamics of \cref{eq: MCHMC dynamics} are not Hamiltonian, so for MAMS, $K$ is not kinetic energy in the standard sense. In \cref{appendix:WE}, a relationship between MAMS dynamics and a Hamiltonian dynamics for which $K$ actually is kinetic energy is given, justifying the name.} $K$. 
The position updates $(\x', \U') = A_{\epsilon}(\x, \U)$ change the potential energy by
\begin{equation} \label{eq:deltaV}
    V(A_{\epsilon}(\Z)) - V(\Z)  = - \log \frac{p(\x')}{p(\x)},
\end{equation}
while the velocity updates $(\x, \U') = B_{\epsilon}(\x, \U)$ change the kinetic energy by
\begin{equation}\label{eq:deltaK2}
    K(B_{\epsilon}(\Z)) - K(\Z) = \frac{1}{2} \left\| \U' \right\|^2 - \frac{1}{2} \left\| \U \right\|^2    
\end{equation}
for HMC and by
 \begin{equation} \label{eq:deltaK}
   K(B_{\epsilon}(\Z)) - K(\Z)  = (d-1) \log \{ \cosh{\delta} + \boldsymbol{e} \cdot \U \sinh{\delta} \}
\end{equation}
for MAMS. Here $\left\| \cdot \right\|$ is the Euclidean norm,
$\boldsymbol{e} = - \nabla \mathcal{L}(\x) / \left\| \nabla \mathcal{L}(\x) \right\|$
and   
$\delta = \epsilon \left\| \nabla \mathcal{L}(\x) \right\| / (d-1)$.

To derive the MH ratio, the key is to realize that the $A$ and $B$ updates are deterministic 
\begin{equation}
    q(\Z' \vert \Z) = \delta(\varphi(\Z) - \Z'),
\end{equation}
where the transition map is generated by a dynamical system \citep{compressibleHMC} for $\Z = (\x, \U)$,  
\begin{equation} \label{drift}
    \dot{\Z}(t) = F(\Z(t)).
\end{equation}
Here, $\Z(0) = \Z$ and $z(T) = \varphi(\Z) = \Z'$. 
The drift vector field $F$ in canonical and microcanonical dynamics can be read from Equations \eqref{eq: HMC dynamics} and \eqref{eq: MCHMC dynamics}, respectively. For the former, it equals $F_A(\x, \U) = (\U, 0)$ during the $A$ updates, and $F_B(\x, \U) = (0, -\nabla \mathcal{L}(\x))$ during the $B$ updates.
For the latter, it equals $F_A(\x, \U) = (\U, 0)$ during the $A$ updates, and $F_B(\x, \U) = (0, -(1 - \U \U^T) \nabla \mathcal{L}(\x) / (d-1))$ during the $B$ updates. These fields can be used to explicitly solve for the $A$ and $B$ updates; the solutions are given in \cref{appendix:updates}.

\begin{lemma} \label{work}



For proposals which are deterministic involutions generated by a dynamical system of the form \eqref{drift}, $W$ in the MH acceptance probability \eqref{mh ratio def}
equals
\[
    W(\Z', \Z) = -\log \frac{p(\Z')}{p(\Z)} - \int_{0}^T \nabla \cdot F(\Z(s)) ds.
\]       
\end{lemma}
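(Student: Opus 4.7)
The plan is to reduce the formally singular ratio of delta proposal densities $q(\Z|\Z')/q(\Z'|\Z)$ to a Jacobian determinant, using the fact that $\varphi$ is an involution, and then to express that Jacobian as an integral of the divergence of $F$ via the Liouville (Abel--Jacobi) formula for the flow \eqref{drift}.

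First I would establish that, for a deterministic proposal $\Z' = \varphi(\Z)$ with $\varphi \circ \varphi = \mathrm{id}$, the symbolic ratio in \eqref{mh ratio def} takes the change-of-variables form
\[
\frac{q(\Z|\Z')}{q(\Z'|\Z)} = |\det \nabla \varphi(\Z)|.
\]
The cleanest way to make this rigorous is to verify detailed balance against an arbitrary bounded test function $f(\Z, \Z')$, pushing the delta forward by $\varphi$; the involution property lets one replace the inverse map by $\varphi$ itself, so that a single Jacobian factor at $\Z$ appears after the change of variables. Substituting into \eqref{mh ratio def} then yields
\[
W(\Z', \Z) = -\log \frac{p(\Z')}{p(\Z)} - \log |\det \nabla \varphi(\Z)|.
\]

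Second, I would invoke the Liouville formula for the flow \eqref{drift}. Writing $J(t) = \nabla_\Z \varphi_t(\Z)$ for the Jacobian of the time-$t$ flow map, differentiating $\dot{\Z}(t) = F(\Z(t))$ with respect to the initial condition gives the variational equation $\dot{J}(t) = \nabla F(\Z(t))\, J(t)$. Jacobi's formula for the derivative of a determinant then yields
\[
\frac{d}{dt} \log |\det J(t)| = \mathrm{tr}(\nabla F(\Z(t))) = \nabla \cdot F(\Z(t)).
\]
Integrating from $0$ to $T$ with initial condition $J(0) = I$ gives $\log|\det \nabla \varphi(\Z)| = \int_0^T \nabla \cdot F(\Z(s))\, ds$, and substituting this into the expression above for $W$ completes the proof.

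The main technical obstacle is the first step: the ratio $q(\Z|\Z')/q(\Z'|\Z)$ is a $0/0$ object if one insists on dividing delta densities pointwise, so the argument must work at the level of the joint measure on $(\Z, \Z')$ rather than with pointwise formulas (in the spirit of Tierney's treatment of general-state-space MCMC). The involution property of $\varphi$ is essential, since it allows one to rewrite the backward proposal using $\varphi$ itself, producing only a single Jacobian factor rather than one from the inverse. Once that is in place, the Liouville identity is a standard ODE calculation and the remainder is bookkeeping.
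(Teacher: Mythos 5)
Your proposal is correct and follows essentially the same route as the paper: the ratio of delta proposal densities is reduced to $\big\vert\det\nabla\varphi(\Z)\big\vert$ using the involution property, and that Jacobian is converted to $\int_0^T \nabla\cdot F(\Z(s))\,ds$ via the Abel--Jacobi--Liouville identity. Your remark about making the delta-function manipulation rigorous at the level of the joint measure is a more careful rendering of the same step the paper carries out formally, not a different argument.
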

\begin{proof}
    The first term comes from the first factor in \cref{mh ratio def}. For the second term, observe that the ratio of transition probabilities is
\begin{equation} \label{eq: J}
    \frac{q(\Z \vert \Z')}{q(\Z' \vert \Z)} = \frac{\delta(\varphi(\Z') - \Z)}{\delta(\varphi(\Z) - \Z')} \nonumber
    = \frac{\delta(\varphi(\Z') - \Z)}{\delta(\Z - \varphi(\Z'))} \, \big{\vert} \frac{\partial \varphi}{\partial \Z}(\Z) \big{\vert} = \big{\vert} \frac{\partial \varphi}{\partial \Z}(\Z) \big{\vert},
\end{equation}
where in the second step we have used reversibility, as well as standard properties of the delta function\footnote{Recall that $\delta(x-a)f(x) = \delta(x-a)f(a)$, and $\delta(f(x)) = \sum_i \delta(x-a_i)|\frac{df}{dx}a_i|^{-1}$, where $a_i$ are the roots of $f$. In our case, $f(z) = \phi(z)-z'$, so that $\delta(\phi(z)-z) = \delta(z-\phi(z'))|\frac{\partial \phi}{\partial z}(\phi(z')|^{-1} = \delta(z-\phi(z'))|\frac{\partial \phi}{\partial z}(z)|^{-1}$}. This last expression is the Jacobian determinant of the transition map $\varphi$.
Finally, the second term of $W$ in \cref{work} follows from Eq. \eqref{eq: J} by the Abel–Jacobi–Liouville identity.
\end{proof}


\begin{wrapfigure}{r}{0.5\textwidth} 
\label{fig:histogram}
  \begin{center}
   \vspace*{-0.9cm}\includegraphics[width=6.cm]{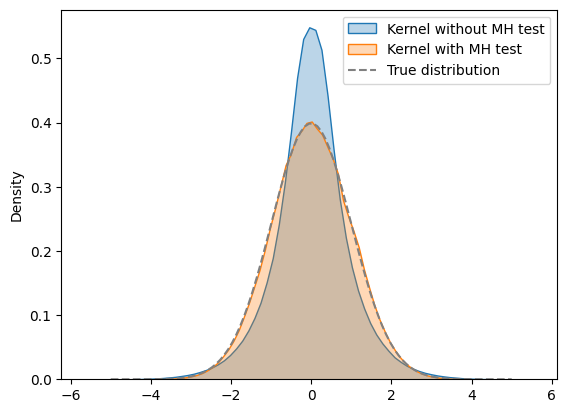}
  \end{center}
  \caption{Histogram of MAMS samples with (orange) and without MH adjustment (blue) from 100-dim standard normal (1st dim shown). Step size is chosen very large ($\epsilon=20$) to highlight the bias the MH step removes.}
\end{wrapfigure}

\begin{lemma} \label{lemma: energy}
    For a proposal of the form $\varphi = \mathcal{T} \circ B_{\epsilon_N} \circ A_{\eta_N} \circ \ldots B_{\epsilon_1} A_{\eta_1}$, where $\epsilon_k \in \mathbb{R}$ and $\eta_k \in \mathbb{R}$ for all $k$, 
    $W$ in both HMC and MAMS 
    equals the total energy change of the proposal, that is, the sum of all the energy changes:
    \begin{equation*}
        W(\varphi(\Z), \Z) = \sum_{k = 1}^N V (A_{\epsilon_{k}}(\Z_{k-1})) - V(\Z_{k-1}) +  K (B_{\epsilon_k} \circ A_{\epsilon_k}(\Z_{k-1})) - K (A_{\epsilon_k}(\Z_{k-1}))
    \end{equation*}
    where $\Z_k = B_{\epsilon_k} \circ A_{\eta_k} \circ \ldots B_{\epsilon_1} \circ A_{\eta_1} (\Z)$ and
    energy changes are taken from Equations 
    \eqref{eq:deltaV} and \eqref{eq:deltaK2} for HMC, and \eqref{eq:deltaV} and \eqref{eq:deltaK} for MAMS.
\end{lemma}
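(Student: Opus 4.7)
The plan is to apply \cref{work} to the composite leapfrog proposal and reorganise its two contributions --- $-\log p(\Z')/p(\Z)$ and $-\int_0^T \nabla \cdot F(\Z(s))\,ds$ --- into the claimed telescoping sum of per-step $V$ and $K$ increments. First I would dispense with the trailing time reversal $\mathcal{T}$: it is a linear involution with unit Jacobian determinant, and both the Gaussian velocity law of HMC and the uniform sphere law of MAMS are invariant under $\U\mapsto-\U$, so $\mathcal{T}$ contributes nothing to either term.

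Next I would split the divergence integral over the $2N$ sub-intervals corresponding to the alternating $A$ and $B$ updates. In both samplers $F_A=(\U,0)$ has vanishing divergence, so the $A$-segments contribute nothing. For HMC, $F_B = (0,-\nabla\mathcal{L})$ is also divergence-free (symplecticity), so the integral vanishes and \cref{work} collapses to $W = V(\x')-V(\x) + K(\U')-K(\U)$ with $K(\U) = \tfrac{1}{2}\|\U\|^2$. This telescopes along $A$-updates (which alone move $\x$) and along $B$-updates (which alone move $\U$ in HMC), yielding per-step increments equal to \eqref{eq:deltaV} and \eqref{eq:deltaK2}.

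For MAMS the state space is the submanifold $\mathbb{R}^d\times S^{d-1}$, so \cref{work} must be interpreted using the intrinsic divergence on the sphere. The log-density ratio then simplifies to $V(\x')-V(\x)$ (the sphere factor is uniform), and a short calculation writing the surface divergence as the ambient one minus its normal component gives $\nabla\cdot F_B = \U\cdot\nabla\mathcal{L}$. Integrating along a $B$-flow with $\x$ frozen, I would parameterise by $\alpha = \boldsymbol{e}\cdot\U$ to reduce the $\U$ dynamics to the scalar ODE $\dot\alpha = \|\nabla\mathcal{L}\|(1-\alpha^2)/(d-1)$, whose solution is a hyperbolic tangent. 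Using $\int\tanh u\,du = \log\cosh u$ together with the tangent addition identity collapses $-\int_0^{\epsilon_k}\nabla\cdot F_B\,ds$ to $(d-1)\log\{\cosh\delta + \boldsymbol{e}\cdot\U\sinh\delta\}$, matching \eqref{eq:deltaK}. Telescoping the $V$-increments over $A$-updates then completes the claim.

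The main obstacle is the MAMS divergence step: one must recognise that the flow is confined to $\mathbb{R}^d\times S^{d-1}$ and use the intrinsic divergence there, rather than the naive $\mathbb{R}^{2d}$ one (which would produce an extraneous factor of $(d+1)/(d-1)$). Once the geometric setup is correct, the hyperbolic integration is routine and the HMC case is an even simpler special case in which the divergence integral vanishes outright.
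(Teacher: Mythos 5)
Your proposal is correct and follows essentially the same route as the paper: apply Lemma \ref{work}, observe that the $A$-updates and the HMC $B$-update are divergence-free so only the log-density ratio survives there, evaluate the nonzero intrinsic divergence $\U\cdot\nabla\mathcal{L}$ of the MAMS $B$-flow along the trajectory to recover \eqref{eq:deltaK}, and sum over the composition by additivity of $W$. Your only deviation is cosmetic --- you integrate the scalar ODE for $\boldsymbol{e}\cdot\U$ directly via its $\tanh$ solution, whereas the paper recognizes the divergence as the total time derivative of $-(d-1)\log\{\cosh\delta+\sinh\delta\,(\boldsymbol{e}\cdot\U)\}$ using the explicit update \eqref{eq: B MCHMC} --- and your caution about using the intrinsic rather than ambient divergence on $S^{d-1}$ matches the paper's own derivation in \cref{appendix:divergence}.
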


\begin{proof}




The position update $A$ in both HMC and MAMS has a vanishing divergence: $\nabla \cdot F_A = \frac{\partial u_i }{\partial x_i} = 0$, so during the position update, only the first term in \cref{work} survives and $W(\Z' \Z) = -\log p(\x' , \U') / p(\x , \U) = - \log p(\x') / p(\x) = \Delta V$ from Equation \eqref{eq:deltaV}.

The velocity update in HMC has vanishing divergence $\nabla \cdot F_B = \frac{-\partial  \nabla_i\mathcal{L}(\x)}{\partial u_i} = 0$, so during the HMC velocity update, only the first term of $W$ in \cref{work} survives and $W(\Z', \Z) = - \log p(\x' , \U') / p(\x , \U) = - \log p(\U') / p(\U) = \Delta K$ from Equation \eqref{eq:deltaK}.

The velocity update in MAMS, on the other hand, has a non-zero divergence:
\begin{multline*}
    \nabla \cdot F_B = - \left\| \nabla \mathcal{L}(\x) \right\| \U(t) \cdot \boldsymbol{e} 
    = \left\| \nabla \mathcal{L}(\x) \right\| \frac{\sinh \delta + \cosh \delta (\boldsymbol{e} \cdot \U)}{\cosh \delta + \sinh \delta (\boldsymbol{e} \cdot \U)} \\  = -(d-1) \frac{d}{dt} \log \{ \cosh \delta + \sinh \delta (\boldsymbol{e} \cdot \U) \}.
\end{multline*}
In the first equality, we have used the divergence from \cite{robnik_fluctuation_2024}, which we also re-derive in \cref{appendix:divergence}. In the second equality, we used the explicit form of the velocity update from \cref{appendix:updates}, namely \cref{eq: B MCHMC}. So we find that the velocity update for MAMS has
\begin{equation*}
    W(\varphi_{F_B}(\Z), \Z) = - \int_0^T \nabla \cdot F_B(\Z(s)) d s  = (d-1) \log \{ \cosh{\delta} + \boldsymbol{e} \cdot \U \sinh{\delta} \} = \Delta K
\end{equation*}
from Equation \eqref{eq:deltaK2}.
A more direct derivation of the above is provided in \cref{appendix:jacobian} for the interested reader.

$W$ of a composition of maps of the form in Lemma \ref{work} is a sum of the individual $W$, i.e., $W(\varphi(\chi(\Z)), \Z) = W(\varphi(\chi(\Z)), \chi(\Z)) + W(\chi(\Z), \Z)$. This yields the formula in the statement of the theorem.
\end{proof}

This result shows how to perform Metropolis adjustment in MAMS: analogously to HMC, we compute the accumulated energy change $W$ in the proposal and accept the proposal with probability $\mathrm{min}(1, e^{-W})$. This is a favorable result, because both the HMC and MAMS numerical integrators keep the energy error small, even over long trajectories \citep{MolecularDynamics}, implying that a high acceptance rate can be maintained.
As an empirical illustration that the MH acceptance probability from \cref{lemma: energy} is correct, \cref{fig:histogram} shows a histogram of 2 million samples from a 100-dimensional Gaussian (1st dimension shown) using the MH-adjusted kernel (orange), given a step size of 20. The kernel without MH adjustment is also shown (blue) and exhibits asymptotic bias.

    


\section{Sensitivity to hyperparameters}
\label{sec: MCLMC}

The performance of HMC is known to be very sensitive to the choice of the trajectory length, and the problem becomes even more pronounced for ill-conditioned targets, where different directions may require different trajectory lengths for optimal performance \citep{neal_mcmc_2011}. This is further illustrated in \cref{sec: Lpartial}. Two solutions to this problem are randomizing the trajectory length \citep{randomizedHMC} and replacing the full velocity refreshment with partial refreshment after every step, also known as the underdamped Langevin Monte Carlo \citep{horowitz1991generalized}.
Here, we will pursue both approaches with respect to MAMS. 

\paragraph{Random integration length} We randomize the integration length by taking $n_k = \lceil 2 h_k L / \epsilon \rceil$
integration steps to construct the $k$-th MH proposal. Here $h_k$ can either be random draws from the uniform distribution $\mathcal{U}(0, 1)$ or the $k$-th element of Halton's sequence, as recommended in \cite{owen2017randomized, hoffman_adaptive-mcmc_2021}. Other distributions of the trajectory length were also explored in the literature \citep{SNAPER} but with no gain in performance. The factor of two is inserted to make sure that we do $L/\epsilon$ steps on average\footnote{More precisely, $\int_0^1 \lceil 2 u L / \epsilon \rceil du \neq L / \epsilon$, because of the ceiling function. In the implementation, we use the correct expression, which is $n_k = \lceil 2 y h_k L / \epsilon \rceil$, where
 $y = \frac{Y(Y+1)}{Y + 1 - L/\epsilon}$ and $Y = \lfloor 2 L / \epsilon - 1\rfloor$ is the integer part of $y$. This follows from solving $L / \epsilon = \mathbb{E}[ n_k ] = \frac{1 + 2 + \ldots + Y  + (y-Y) (Y+1)}{y} = \frac{(Y+1)(y - Y/2)}{y}$ for $y$.
}.

\paragraph{Partial refreshment} Partially refreshing the velocity after every step also has the effect of randomizing the time before the velocity coherence is lost, and therefore has similar benefits to randomizing the integration length \citep{Qijia}. 
However, while the flipping of velocity, needed for the deterministic part of the update to be an involution, is made redundant by a full resampling of velocity, this is not the case for partial refreshment. This results in rejected trajectories backtracking some of the progress that was made in the previously accepted proposals \citep{MALT}. Skipping the velocity flip is possible, but it results in a small bias in the stationary distribution \citep{NoMomentumFlip}, and it is not clear that it has any advantages over full refreshment. Two popular solutions for LMC are to either use a non-reversible MH acceptance probability as in \cite{neal_non-reversibly_nodate, meads} or to add a full velocity refreshment before the MH step as in MALT \citep{MALT}. We will prove that both can be straightforwardly used with microcanonical dynamics, and then concentrate on the MALT strategy in the remainder of the paper.

We will generate the Langevin dynamics by the ``OBABO'' scheme \citep{MolecularDynamics}, where BAB is the deterministic $\varphi_{\epsilon}$ map from \cref{eq: BAB} and $O$ is the partial velocity refreshment. In LMC, $O_{\epsilon}(\U) = c_1 \U + c_2 \boldsymbol{Z}$, where $Z$ is the standard normal distributed variable, $c_1 = e^{- \epsilon / L_{\mathrm{partial}}}$ and $c_2 = \sqrt{1-c_1^2}$. $L_{\mathrm{partial}}$ is a parameter that controls the partial refreshment's strength and is comparable with HMC's trajectory length $L$. With microcanonical dynamics, a similar expression that additionally normalizes the velocity has been proposed \citep{robnik_fluctuation_2024}:
\begin{equation}
    O_{\epsilon}(\U) = \frac{c_1 \U + c_2 \boldsymbol{Z} / \sqrt{d}}
    {\left\| c_1 \U + c_2 \boldsymbol{Z} / \sqrt{d} \right\|}.
\end{equation}
Denote by $\Delta(\Z' , \Z)$ the energy error accumulated in the \textit{deterministic} ($\varphi_{\epsilon}$) part of the update. Note that for microcanonical Langevin dynamics, only the deterministic part of the update changes the energy, while in canonical Langevin dynamics, the $O$ update also does but is not included in $\Delta$.

\begin{theorem} \label{theorem: one step}
    The Metropolis-Hastings acceptance probability of the MAMS proposal $q(\Z' \vert \Z)$, corresponding to 
    $\mathcal{T} OBABO$ is
    $\mathrm{min}(1,\,e^{-\Delta(\Z', \Z)})$.
\end{theorem}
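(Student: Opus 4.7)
The plan is to verify detailed balance against $\pi(\Z) = p(\x)\,\mathcal{U}_{S^{d-1}}(\U)$ for the composite $\mathcal{T}OBABO$ proposal and show that the stochastic $O$ refreshments contribute trivially to the MH ratio, leaving only the energy error of the deterministic $\varphi_\epsilon=BAB$ middle segment. The full acceptance $\alpha(\Z,\Z')$ will then reduce to the one obtained in \cref{lemma: energy} for the BAB sub-proposal.

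First, I would establish two properties of the microcanonical partial refreshment kernel $O$: (i) reversibility with respect to the uniform measure on the sphere, i.e.\ $\mathcal{U}(\U)\,O(\U\to\U') = \mathcal{U}(\U')\,O(\U'\to\U)$, and (ii) invariance under joint antipodal flips, $O(\U\to\U') = O(-\U\to-\U')$. Both follow from writing the density of $\U'\mid\U$ via the auxiliary Gaussian noise $\boldsymbol{Z}$ and using the symmetry of the standard normal under sign changes, together with a change of variables to spherical coordinates; this is essentially the calculation carried out in \cite{robnik_fluctuation_2024}. I would also recall the involution $\mathcal{T}\varphi_\epsilon \mathcal{T}\varphi_\epsilon = \mathrm{id}$, which is the reason for inserting $\mathcal{T}$ at the end.

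Next I would write out the proposal density as an integral over the intermediate velocity variables produced by the two $O$ kernels,
\[
q(\Z'|\Z) = \int O(\U\to\U_1)\,\delta\!\big(\Z'-\mathcal{T}\circ O \circ \varphi_\epsilon(\x,\U_1)\big)\,d\U_1,
\]
and analogously for $q(\Z|\Z')$. Using the involution property together with the flip $\mathcal{T}$, a change of variables identifies the reverse trajectory's intermediate states with the sign-reversed forward ones: if $\Z^{(1)}=(\x,\U_1)$ and $\Z^{(2)}=\varphi_\epsilon(\Z^{(1)})$ appear in the forward path, then the reverse path passes through $\mathcal{T}\Z^{(2)}$ and $\mathcal{T}\Z^{(1)}$. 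Applying the two $O$-kernel identities (i) and (ii), the backward $O$ factors recombine into the forward $O$ factors, and they cancel in the ratio $\pi(\Z')q(\Z|\Z')/\pi(\Z)q(\Z'|\Z)$.

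What remains after cancellation is the ratio $\pi(\Z^{(2)})/\pi(\Z^{(1)})$ multiplied by the Jacobian determinant of $\varphi_\epsilon$ at $\Z^{(1)}$, which by \cref{work,lemma: energy} equals $e^{-\Delta(\Z',\Z)}$, where $\Delta$ is exactly the energy change accumulated during the BAB sub-trajectory. Hence $\alpha(\Z,\Z')=\min(1,e^{-\Delta(\Z',\Z)})$. The principal obstacle is the bookkeeping for the change of variables in the intermediate integration: one must ensure that the Jacobian picked up in rewriting the reverse integrand is precisely the one that, combined with $\pi(\Z^{(2)})/\pi(\Z^{(1)})$, reproduces the MH expression of \cref{lemma: energy}. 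Once the reversibility and antipodal symmetry of $O$ are in hand, this reduces to applying \cref{lemma: energy} to the BAB segment and observing that the $O$ segments are $\pi$-reversible Gibbs-like moves that leave the target invariant on their own.
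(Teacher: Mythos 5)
Your proposal is correct and follows essentially the same route as the paper's proof: write $q$ as an integral over the intermediate velocities introduced by the two $O$ kernels, use the symmetry of $o$ (equivalently, its reversibility with respect to the uniform measure on the sphere) together with the antipodal-flip invariance $o(\overline{\boldsymbol{x}}\,\vert\,\overline{\boldsymbol{y}})=o(\boldsymbol{x}\,\vert\,\boldsymbol{y})$ and the involution property to cancel the $O$ factors in the MH ratio, leaving only the Jacobian of the deterministic BAB segment, which \cref{lemma: energy} identifies with $e^{-\Delta}$. The only part you defer --- the change-of-variables bookkeeping in the intermediate integrals --- is exactly the computation the paper carries out explicitly, so there is no substantive difference in approach.
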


The generalized HMC strategy \citep{generalizedHMC, meads} only uses the one-step proposal, so \cref{lemma: energy} shows that it can be generalized to the microcanonical update, simply by using the microcanonical energy instead of the canonical energy. 
The MALT proposal, on the other hand, consists of $n$ LMC (or in our case, microcanonical LMC) steps and a full refreshment of the velocity, as shown in Algorithm \ref{alg: malt}.

\begin{theorem} \label{theorem: MALT reversibility}
$\{\x_i \}_{i > 0}$ defined in Alg \ref{alg: malt} is a Markov chain whose stationary distribution is $p(\Z)$.
\end{theorem}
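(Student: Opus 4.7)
The plan is to show that the joint target $\pi(\x, \U) = p(\x)\,\mathcal{U}_{S^{d-1}}(\U)$ is invariant under the MALT transition kernel; marginalization over $\U$ then yields $p(\x)$ as the stationary distribution of the $\x$-chain $\{\x_i\}_{i>0}$.

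First I would argue that each inner-loop LMC step is $\pi$-invariant. A single step consists of the $\mathcal{T}OBABO$ proposal filtered by $\min(1, e^{-\Delta})$, which is the correct Metropolis--Hastings acceptance probability by \cref{theorem: one step}; the standard MH detailed-balance argument then yields $\pi$-reversibility, hence $\pi$-invariance. The $\mathcal{T}$ factor is what turns the deterministic portion of the proposal into an involution, and upon rejection the flipped state $(\x, -\U)$ is retained; this velocity flip is essential for detailed balance but is harmless for the $\x$-marginal because $\mathcal{U}_{S^{d-1}}$ is symmetric under $\U \mapsto -\U$.

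Second, the terminal full refreshment $\U \sim \mathcal{U}_{S^{d-1}}$ is trivially $\pi$-invariant because it samples $\U$ from its exact $\pi$-marginal independently of $\x$. Composition of $\pi$-invariant kernels is $\pi$-invariant, so the full MALT kernel (composition of $n$ inner LMC+MH kernels with the refresh kernel) preserves $\pi$, and marginalization yields $p(\x)$. The Markov property of $\{\x_i\}_{i>0}$ is immediate: after the full refresh closing each outer iteration, $\U$ is independent of the history, so the next iteration depends on the past only through $\x_i$.

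The hard part will be rigorously justifying the inner-step detailed-balance claim, since the stochastic $O$ refreshments are not part of a deterministic involution and the MH framework is usually stated for deterministic proposals. The cleanest route is to decompose the inner kernel into a partial refreshment, a Metropolis step applied to the deterministic involution $\mathcal{T} \circ BAB$, and another partial refreshment; then show that each partial-refresh component preserves $\mathcal{U}_{S^{d-1}}$ (and hence $\pi$, since $\x$ is untouched) and that the middle deterministic MH step is $\pi$-reversible by the standard involution-proposal argument, with $\Delta$ playing the role of the log-MH ratio as identified by \cref{lemma: energy}. A mild additional check is that in microcanonical dynamics the $O$ step leaves the energy unchanged, so no ``O'' contribution to the acceptance ratio is missing from $\Delta$.
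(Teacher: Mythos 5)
There is a genuine gap: your proof is for a different algorithm than the one in \cref{alg: malt}. You treat each inner loop iteration as a $\mathcal{T}OBABO$ proposal \emph{individually} filtered by $\min(1,e^{-\Delta})$, and then compose $n$ such ``LMC+MH kernels'' with a terminal refresh. But the MALT-style kernel of \cref{alg: malt} runs the $n$ OBABO steps \emph{unadjusted}, merely accumulating $\delta = \sum_i \Delta(\Z_i',\Z_i)$, and performs a \emph{single} accept/reject at the end of the whole trajectory, returning to the initial point $\x^I$ on rejection. Your composition argument therefore does not apply: the individual unadjusted OBABO steps are not $\pi$-invariant (they carry the discretization bias), and the one Metropolis step that is actually performed has as its proposal the law of an $n$-step \emph{stochastic} trajectory (because of the $O$ noise), so the standard deterministic-involution MH argument you invoke for the middle $\mathcal{T}\circ BAB$ factor does not directly give reversibility of the trajectory-level accept/reject. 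What you prove instead is invariance for the per-step-adjusted (generalized-HMC/MEADS-style) variant, which the paper explicitly distinguishes from the MALT proposal precisely because per-step adjustment forces momentum flips on rejection.

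The missing idea is the lift to trajectory space. The paper defines the extended density $\mathcal{P}(\Z_{0:L}) = p(\Z_0)\prod_{i=1}^{L} q(\Z_i\mid\Z_{i-1})$, a Gibbs kernel $\mathcal{G}$ that resamples $\U_0$ and regenerates the trajectory (reversible w.r.t.\ $\mathcal{P}$ by construction), and a Metropolis kernel $M$ whose proposal is the \emph{deterministic} involution that time-reverses the whole trajectory; the palindromic composition $\mathcal{Q}=\mathcal{G}M\mathcal{G}$ is then $\mathcal{P}$-reversible and marginalizes to a $p$-reversible kernel on $\x_0$. The one-step result (\cref{theorem: one step}), which you correctly identify as an ingredient, enters there only to show that the trajectory-level MH ratio $\mathcal{P}(\overline{\Z}_{0:L})/\mathcal{P}(\Z_{0:L})$ telescopes into $e^{-\sum_i \Delta(\Z_i,\Z_{i-1})}$, i.e.\ that the accumulated \emph{deterministic} energy error is the correct acceptance statistic even though the forward and backward proposals are stochastic. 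Without this (or an equivalent auxiliary-variable device), your argument does not establish the stationarity of $\{\x_i\}$ as produced by \cref{alg: malt}. Your observations about the Markov property after the full refresh and about the $O$ update not contributing to $\Delta$ are correct, but they are the easy parts.
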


Proofs of both theorems are in \cref{appendix:proof}.

\section{Automatic hyperparameter tuning} \label{sec: adaptation}

MAMS has two hyperparameters, stepsize $\epsilon$ and the trajectory length $L$, where $L/\epsilon$ is the (average) number of steps in a proposal's trajectory. The Langevin version of the algorithm has an additional hyperparameter $L_{\mathrm{partial}}$ that determines the partial refreshment strength during the proposal trajectories, i.e.,~the amount of Langevin noise. In addition, it is common to use a preconditioning matrix $M$ to linearly transform the configuration space, in order to reduce the condition number of the covariance matrix.
The algorithm's performance crucially depends on these hyperparameters, so we here develop an automatic tuning scheme.
First, the stepsize is tuned, then the preconditioning matrix, and finally, the trajectory length. 
$L_{\mathrm{partial}}$ is directly set by the trajectory length, see \cref{sec: Lpartial}.
We adopt a schedule similar to the one in \citep{robnik_microcanonical_2024}, where each of these three stages takes $10 \%$ of the total sampling time, so that tuning does not significantly increase the total sampling cost. 

\paragraph{Stepsize} We extend the argument from \citep{beskos_optimal_2013, neal_mcmc_2011} to MAMS in Appendix \ref{sec: neal}, showing that the optimal acceptance rate in MAMS is the same as in HMC, namely $65 \%$. 
For HMC, larger acceptance rates have been observed to perform better in practice \citep{phan_composable_2019}, with some theoretical justification \citep{betancourt_optimizing_2015}. We set the acceptance rate to $90 \%$. 
In the first stage of tuning, we use a stochastic optimization scheme, dual averaging \citep{dualAveraging} from \citep{NUTS}, to adapt the step size until a desired acceptance rate is achieved.
\paragraph{Preconditioning matrix} In the second stage, we determine the preconditioning matrix. A simple choice of diagonal preconditioning matrix is obtained by estimating variance along each parameter, which can be done with any sampler. In practice, we use a run of NUTS.
\paragraph{Trajectory length} Microcanonical and canonical dynamics are extremely efficient in exploring the configuration space, while staying on the typical set. Therefore, we do not wish to reduce them to a comparatively inefficient diffusion process by adding too much noise, i.e., having too low $L$. On the other hand, we want to prevent the dynamics from being caught in cycles or quasi-cycles to maintain efficient exploration. 
Heuristically, we should send the dynamics in a new direction at the time scale that the dynamics needs to move to a different part of the configuration space, producing a new effective sample \citep{robnik_microcanonical_2024}.
This suggests two approaches for tuning $L$ \citep{robnik_microcanonical_2024}.

The simpler is to estimate the size of the typical set by computing the average of the eigenvalues of the covariance matrix, which is equal to the mean of the variances in each dimension \citep{robnik_microcanonical_2024}. With a linearly preconditioned target, these variances are 1, and the estimate for the optimal $L$ is $L = \sqrt{d}$. We will use this as an initial value. A more refined approach is to set $L$ to be on the same scale as the time passed between effective samples:
\begin{align} \label{eq:alba}
    L_{ALBA} &\propto \mathbb{E}[ \text{time between effective samples} ]  &= 
     \mathbb{E}[ \text{time between samples} ] \,
    \tau_{\mathrm{int}} = L  \,
    \tau_{\mathrm{int}} ,
\end{align}

The proportionality constant is of order one and will be determined numerically, based on Gaussian targets. Integrated autocorrelation time $\tau_{\mathrm{int}}$ is the ratio between the total number of (correlated) samples in the chain and the number of effectively uncorrelated samples. 
It depends on the observable $f(\x)$ that we are interested in and can be calculated as
\begin{equation}
    \tau_{\mathrm{int}}[f] = 1 + 2 \sum_{t=1}^{\infty}\rho_t[f],
\end{equation}
where 
\begin{equation}
    \rho_t[f] = \frac{\mathbb{E}[(f(\x(s)) - \mathbb{E}[f]) (f(\x(s+t)) - \mathbb{E}[f]) ]}{\mathrm{Var}[f]}    
\end{equation}
is the chain autocorrelation function in stationarity. We take $f(x_i) = x_i$ and harmonically average $\tau_{\mathrm{int}}[x_i]$ over $i$.
We determine the proportionality constant of \cref{eq:alba} in a way that $L$ equals the optimal L, determined by a grid search, for the standard Gaussian. We find a proportionality constant of $0.3$ for MAMS without Langevin noise and $0.23$ with Langevin noise.



\begin{figure*}
    \centering
    \includegraphics[scale=0.34]{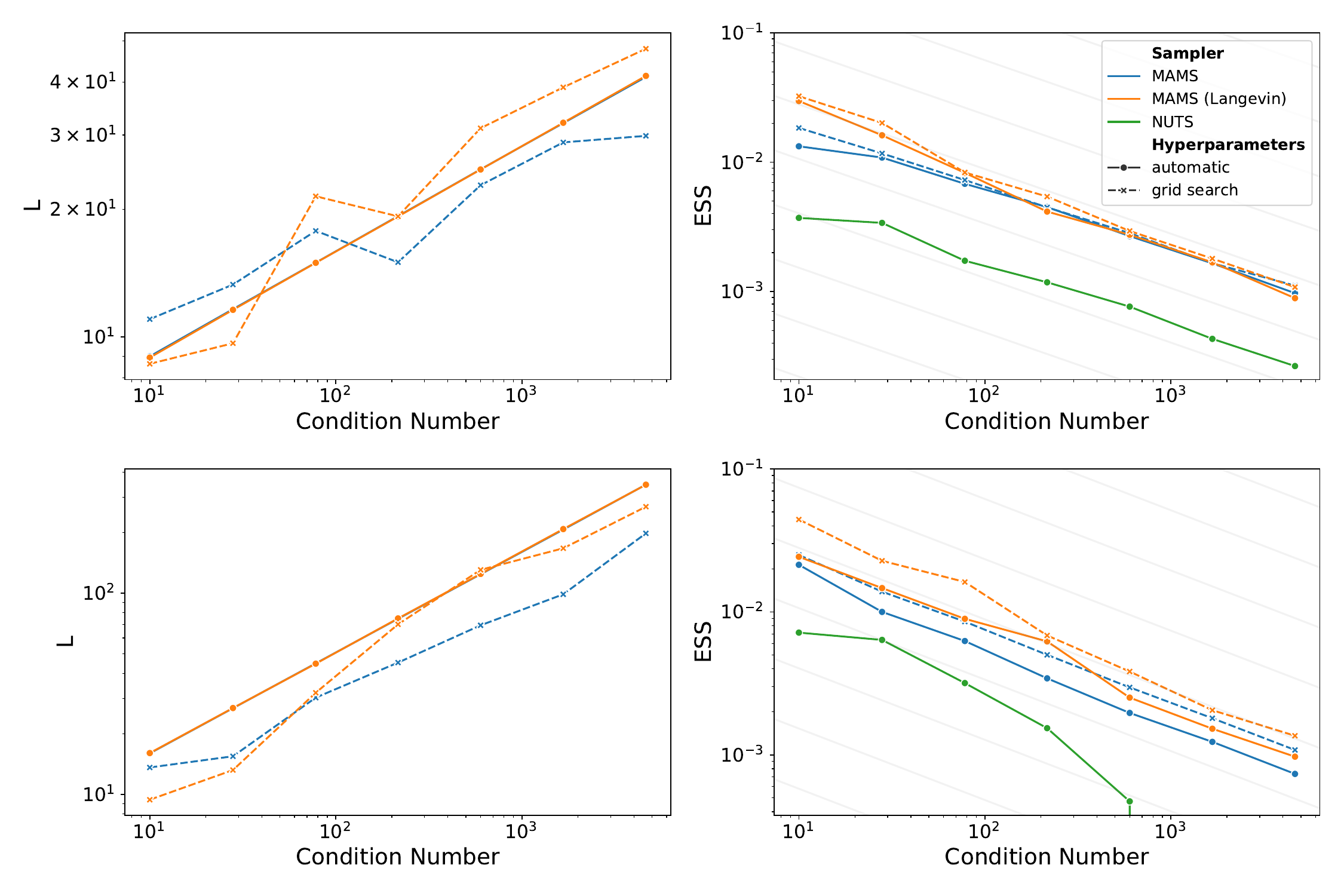}
    \caption{Tuning performance on Gaussians as a function of condition number. Gaussians are 100d with eigenvalues log-uniform distributed (top row) and outlier distributed (bottom).
    Left panels: the value of $L$ from the automatic tuning algorithm is shown with a solid line, and the optimal $L$ obtained by a grid search (dashed). As can be seen, automatic tuning achieves close to optimal values. 
    Right panels: ESS (for the worst parameter) is shown as a function of condition number. MAMS tuning scheme (solid lines) achieves ESS, which is very close to the grid search results (dashed lines). MAMS scaling is similar to NUTS and goes as $ESS \propto \mathrm{condition \, number}^{-1/2}$, which is shown as grey lines in the background. However, MAMS has around four times better proportionality constant.}
    \label{fig:condition}
\end{figure*}

\section{Experiments} \label{sec: experiments}

We aim to compare MAMS with the state-of-the-art \emph{black-box} sampling algorithm NUTS. As discussed in section \ref{relatedwork}, this is the main competitive \emph{black-box} sampler in the single chain setting.
\vspace{-0.3cm}
\paragraph{Evaluation metric} We follow \cite{hoffman_tuning-free_2022} and define the squared error of the expectation value $\mathbb{E}[f(\x)]$ as
\begin{equation} \label{bias}
    b^2(f) = \frac{(\mathbb{E}_{\mathrm{sampler}}[f] - \mathbb{E}[f])^2}{\mathrm{Var}[f]},
\end{equation}
and consider the largest second-moment error across parameters,  $b^2_{\mathrm{max}} \equiv \max_{1 \leq i \leq d} b^2(x_i^2)$, because in our problems of interest, there is typically a parameter of particular interest that has a significantly higher error than the other parameters (for example, a hierarchical parameter in many Bayesian models). 
For distributions which are a product of independent low-dimensional distributions (standard Gaussian, Rosenbrock function, and Cauchy problems), we take an average instead of the maximum because all parameters should have the same error. 
For the Cauchy distribution, the second moment $\mathbb{E}[x^2]$ diverges, so we instead consider the expected value of $- \log p(x)$, i.e., the entropy of the distribution.
$b^2$ can be interpreted as an accuracy equivalent of 100 effective samples \citep{hoffman_tuning-free_2022}.

In typical applications, computing the gradients $\nabla \log p(\x)$ dominates the total sampling cost, so we take the number of gradient evaluations as a proxy of wall-clock time. 
For very simple models, gradient calls might not dominate the cost, and the exact implementation of the numerical integration becomes important. Our implementation is efficient; for example, 1000 samples with L = 2 and stepsize = 1 for stochastic volatility take 1.5 seconds with MAMS on a single CPU and 4.8 seconds with NUTS. We do not report these numbers since they are irrelevant for the more expensive models that the method is meant to be applied to in practice.
As in \citep{meads}, we measure a sampler's performance as the number of gradient calls $n$ needed to achieve low error, $b^2_{\mathrm{max}} < 0.01$. 
We note that it is common to report effective sample size per gradient evaluation instead, but both carry similar information, since $n$ can be interpreted as $100 /( \mathrm{ESS / \# \, gradient \, evaluations})$. Furthermore, we argue that the former is of primary interest and the latter is only used as a proxy for the former. 
\paragraph{Scaling with the condition number and the dimensionality} \cref{fig:condition} compares MAMS with NUTS on 100-dimensional Gaussians with varying condition number. Two distributions of covariance matrix eigenvalues are tested: uniform in log and outlier distributed. \emph{Outlier distributed} means that two eigenvalues are $\kappa$ while the other eigenvalues are $1$.
For both samplers, the number of gradients to low error scales with the condition number $\kappa$ as $\kappa^{-\frac{1}{2}}$, but MAMS is faster by a factor of around 4.
Figure \ref{fig:scaling-gauss-rsbk} compares MAMS and NUTS scaling with the problem's dimensionality. Both have the known $d^{\frac{1}{4}}$ scaling law \citep{neal_mcmc_2011}, albeit MAMS has a better proportionality constant. 

\begin{figure}
    \centering
    \includegraphics[width=0.95\linewidth]{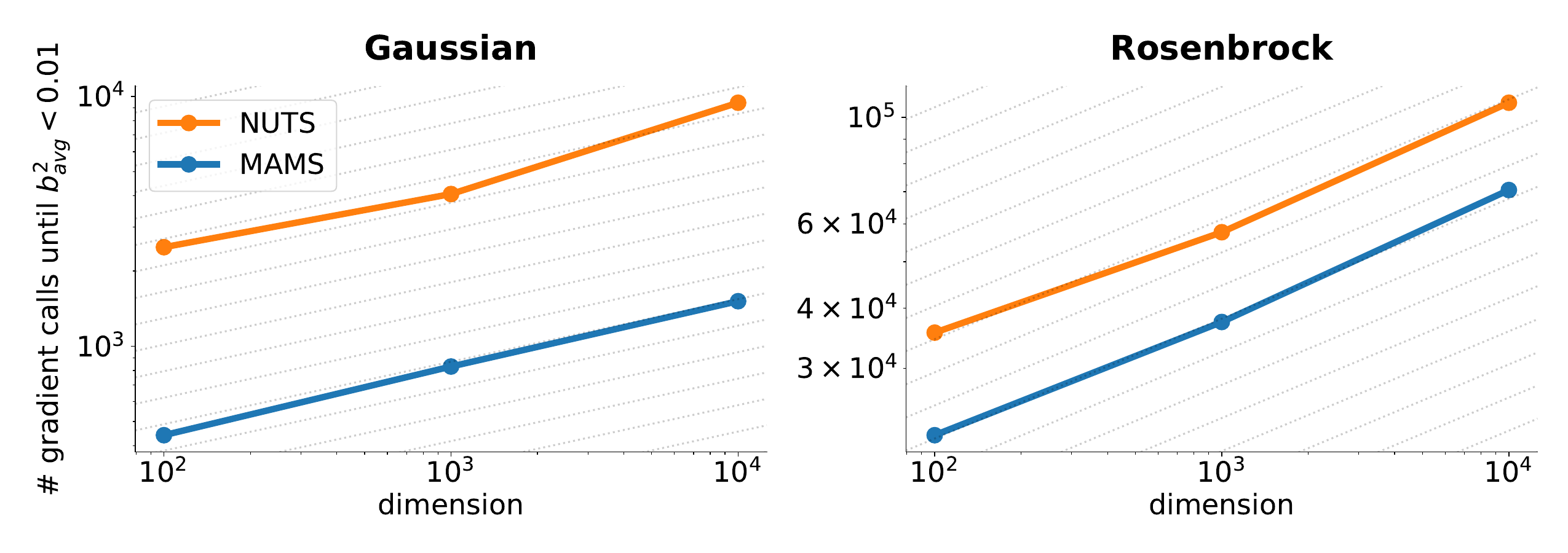}
    \caption{Sampling performance as a function of the dimensionality of the problem. On the left, the problem is a standard Gaussian, on the right, multiple independent copies of the Rosenbrock function (a banana-shaped target). Number of gradient calls to convergence scales as $d^{1/4}$ (grey lines in the background) for both MAMS and NUTS, but MAMS has a better proportionality constant.}
    \label{fig:scaling-gauss-rsbk}
\end{figure}

\begin{table}[]
    \centering

    {\begin{tabular}{cccc|c}
    \toprule & NUTS & MAMS & MAMS (Langevin) & MAMS (Grid Search) \\
    \midrule
    Gaussian & 19,652 & 3,249 & \textbf{3,172} & 3,121 \\ 
    Banana & 95,519 & \textbf{14,078} & 14,818 & 15,288  \\ 
    Bimodal & 210,758 & 139,418 & \textbf{136,770} & 123,295 \\
    Rosenbrock & 161,359 & \textbf{94,184} & 103,545 & 93,782 \\
    Cauchy & 171,373 & \textbf{110,404} & 155,963 & 87900  \\
    Brownian & 29,816 & \textbf{13,528} & 15,232 & 14,015  \\
    German Credit & 88,975 & 55,748 & \textbf{49,979} & 52,265  \\
    ItemResp & 76,043 & \textbf{45,371} & 56,902 & 45,640  \\
    StochVol & 843,768 & \textbf{430,088} & 510,190 & 431,957  \\ 
    Funnel & $> 10^8$  & 2,346,899 & \textbf{1,765,311} & 1,013,048 \\ 
    \end{tabular}}
    \caption{Number of gradients calls needed to get the squared error on the worst second moment below 0.01. Lower is better; number of gradients is roughly proportional to wall clock time.}
    \label{table}
\end{table}

\paragraph{Benchmarks} \cref{table} compares NUTS with MAMS on a set of benchmark problems, mostly adapted from the Inference Gym \citep{inferencegym}.
Problems vary in dimensionality (36--2429), are both synthetic and with real data, and include a distribution with a very long tail (Cauchy), a bimodal distribution (Bimodal), and many Bayesian inference problems. Problem details are in Appendix \ref{appendix:models}. We do not include Bayesian neural networks because they lack an established ground truth \citep{izmailov_what_2021}, would require the use of stochastic gradients, which MAMS is not directly amenable to and because there is evidence that suggests that higher accuracy samples are not necessary for the good performance \citep{wenzel_how_2020}, so unadjusted methods achieve superior performance \citep{sommer_microcanonical_2025}.

For both algorithms, we use an initial run to find a diagonal preconditioning matrix. For NUTS, the only remaining parameter to tune is step size, which is tuned by dual averaging, targeting $80 \%$ acceptance rate. For MAMS, we further tune $L$ using the scheme of \cref{sec: adaptation}.
We take the adaptation steps as our burn-in, initializing the chain with the final state returned by the adaptation procedure.
NUTS is run using the BlackJax \citep{cabezas2024blackjax} implementation, with the provided window adaptation scheme.
\cref{table} shows the number of gradient calls in the chain (excluding tuning) used to reach squared error of $0.01$. 
To reduce variance in these results, we run at least 128 chains for each problem and take the median of the error across chains at each step.

\paragraph{Results} In all cases, MAMS outperforms NUTS, typically by a factor of two at worst and seven at best.
Using Langevin noise instead of the trajectory length randomization has little effect in most cases, analogous to the situation for HMC \citep{Qijia, riou-durand_adaptive_2023}.
For Neal's Funnel, we need an acceptance rate of 0.99 for MAMS to converge. We were unable to obtain convergence for NUTS. This problem is a known NUTS failure mode, so it is of note that MAMS converges.

To assess how successful the tuning scheme from Section \ref{sec: adaptation} is at finding the optimal value of $L$, we perform a grid search over $L$, by first performing a long NUTS run to obtain a diagonal of the covariance matrix and an initial $L$, and then for each new candidate value of $L$, tuning step size by dual averaging with a target acceptance rate of $0.9$.
In \cref{table} we show the number of gradients to low error using this optimal $L$. Performance is very close to optimal on all benchmark problems.




\section{Conclusions}

Our core contribution is MAMS, an out-of-the-box gradient-based sampler applicable in the same settings as NUTS HMC and intended as a successor to it. 
Our experiments found substantial performance gains for MAMS over NUTS in terms of statistical efficiency. These experiments were on problems varying in dimension (up to $10^4$) and included real datasets, multimodality, and long tails. 
This said, to reach the maturity of NUTS, the method needs to be battle-tested over many years on an even broader variety of problems.

We note that MAMS is simple to implement, with little code change from standard HMC. A promising future direction is the many-short-chains MCMC regime \citep{hoffman_tuning-free_2022, margossian_nested_2024}, since MAMS is not as control-flow heavy as NUTS and since MAMS with Langevin noise can have a fixed number of steps per trajectory, making parallelization more efficient \citep{sountsov_running_2024}.

\bibliography{references,referencesOld, references_extra}
\bibliographystyle{chicago}  

\newpage
\appendix
\onecolumn

\section{Metropolis adjusted Microcanonical Langevin dynamics proofs} \label{appendix:proof}

Denote by $o(\Z' \vert \Z)$ the density corresponding to the $O$ update and by $q(\Z' \vert \Z)$, the density corresponding to the single step proposal $\mathcal{T} OBABO$. We will use a shorthand notation for the time reversal: $\overline{\Z} = \mathcal{T}(\Z)$
and denote by $\Delta(\Z' , \Z)$ the energy error accumulated in the \textit{deterministic} part of the update. 

\subsection{Proof of \cref{theorem: one step}}

\begin{proof}
    For the MH ratio we will need 
    \begin{align}
        \frac{q(\Z \vert \overline{\Z}')}{q(\overline{\Z'} \vert \Z)}
        &= \frac{\int o(\overline{\Z} \vert \boldsymbol{Z}') \, \delta(\boldsymbol{Z}' , \varphi(\boldsymbol{Z})) \, o(\boldsymbol{Z} \vert \overline{\Z}') d \boldsymbol{Z} d \boldsymbol{Z}'}{\int o(\Z' \vert \boldsymbol{Z}') \, \delta(\boldsymbol{Z}' , \varphi(\boldsymbol{Z})) \, o(\boldsymbol{Z} \vert \Z) d \boldsymbol{Z} d \boldsymbol{Z}'}
        = \frac{\int o(\overline{\Z} \vert \varphi(\boldsymbol{Z})) \, o(\boldsymbol{Z} \vert \overline{\Z}') d \boldsymbol{Z}}{\int o(\Z' \vert \varphi(\boldsymbol{Z})) \, o(\boldsymbol{Z} \vert \Z) d \boldsymbol{Z}},
    \end{align}

    where we have used the delta function to evaluate the integral over $\boldsymbol{Z}'$.

    We can further simplify the numerator    \begin{align*}
        \int o(\overline{\Z} \vert \varphi(\boldsymbol{Z})) \, o(\boldsymbol{Z} \vert \overline{\Z}') d \boldsymbol{Z} 
        &= \int o(\Z \vert \, \overline{\varphi(\boldsymbol{Z})} \,) \, o(\overline{\boldsymbol{Z}} \vert \Z') d \boldsymbol{Z}
         = \int o(\Z \vert \, \overline{\varphi(\overline{\boldsymbol{Z}})} \,) \, o(\boldsymbol{Z} \vert \Z') d \boldsymbol{Z} \\ 
         = 
         \int o(\Z \vert \varphi^{-1}(\boldsymbol{Z})) \, o(\boldsymbol{Z} \vert \Z') d \boldsymbol{Z} 
        &= \int o(\Z \vert \boldsymbol{Z} ) \, o(\varphi(\boldsymbol{Z}) \vert \Z') \bigg{\vert} \frac{\partial \varphi(\boldsymbol{Z})}{\partial \boldsymbol{Z}}\bigg{\vert} d \boldsymbol{Z} 
        = \int o(\Z' \vert \varphi(\boldsymbol{Z})) \, o(\boldsymbol{Z} \vert \Z) \bigg{\vert} \frac{\partial \varphi(\boldsymbol{Z})}{\partial \boldsymbol{Z}}\bigg{\vert} d \boldsymbol{Z} .
    \end{align*}
    In the first step we have used that $o(\overline{\boldsymbol{x}} \vert \, \overline{\boldsymbol{y}}) = o(\boldsymbol{x} \vert \, \boldsymbol{y})$ and that time reversal is an involution.
    In the second step, we have performed a change of variables from $\overline{\boldsymbol{Z}}$ to $\boldsymbol{Z}$ (for which, the Jacobian determinant of the transformation is $1$).
    In the third step we used that $\overline{\varphi(\overline{\boldsymbol{Z}})} = \varphi^{-1}(\boldsymbol{Z})$.
    In the fourth step we change variables to $\varphi^{-1}(\boldsymbol{Z})$ instead of $\boldsymbol{Z}$.
    In the last step we use that $o(\boldsymbol{y} \vert \, \boldsymbol{x}) = o(\boldsymbol{x} \vert \, \boldsymbol{y})$.

    Since $o$ only connects states with the same $\x$ there is only one $\boldsymbol{Z}$ which makes the integral nonvanishing and we get 
    \begin{equation*}
        \frac{q(\overline{\Z'} \vert \Z)}{q(\Z \vert \overline{\Z}')} = \bigg{\vert} \frac{\partial \varphi(\boldsymbol{Z})}{\partial \boldsymbol{Z}}\bigg{\vert} , 
    \end{equation*}
    as if there were no O updates. The O updates also preserve the target density, so we see that the acceptance probability is only concerned with the BAB part of the update. In this case, the desired acceptance probability was already derived in \cref{lemma: energy}.
\end{proof}

\subsection{Proof of \cref{theorem: MALT reversibility}}
\begin{proof}

Following a similar structure of the proof as in \citep{MALT} we will work on the space of trajectories $\Z_{0:L} = (\Z_0, \ldots, \Z_L) \in \mathcal{M}^{L+1}$. We will define a kernel $\mathcal{Q}$ on the space of trajectories, with $q$ as a marginal-$\x_0$ kernel. We will prove that $\mathcal{Q}$ is reversible with respect to the extended density
\[
\mathcal{P}(\Z_{0:L}) = \prod_{i=1}^{L} q(\Z_i \vert \Z_{i-1}) p(\Z_0),
\]
and use it to show that $q$ is reversible with respect to the marginal $p(\x_0)$.

We define the Gibbs update, corresponding to the conditional distribution $\mathcal{P}(\cdot \vert \x_0)$:
\[
\mathcal{G}(\Z'_{0:L} \vert \Z_{0:L}) = \delta(\x'_{0} - \x_0) U_{S^{d-1}}(\U'_{0}) \prod_{i=1}^{L} q(\Z'_i \vert \Z'_{i-1}) .
\]
The Gibbs kernel $\mathcal{G}$ is reversible with respect to $\mathcal{P}$ by construction. Built upon a deterministic proposal of the backward trajectory 
\begin{equation}\label{reverse trajectory}
    \overline{\Z}_{0:L} = (\overline{\Z}_L, \overline{\Z}_{L-1}, \ldots, \overline{\Z}_0) ,
\end{equation} 
we introduce a Metropolis update:
\[
M(\Z'_{0:L} \vert \Z_{0:L}) = P_{MH} \, \delta(\Z'_{0:L} - \overline{\Z}_{0:L}) + (1 - P_{MH})\delta(\Z'_{0:L} - \Z_{0:L}) ,
\]
where $P_{MH}(\Z_{0:L}) = \mathrm{min}(1, e^{-\Delta(\Z_{0:L})})$.
For $\eta > 0$, the distribution $\mathcal{P}$ admits a density with respect to Lebesgue's measure. Therefore
\begin{equation} \label{mh ratio}
    e^{- \Delta(\Z_{0:L})} = \frac{\mathcal{P}(\overline{\Z}_{0:L})}{\mathcal{P}(\Z_{0:L})} \bigg{\vert} \frac{\partial \overline{\Z}_{0:L}}{\partial \Z_{0:L}} \bigg{\vert}     
\end{equation}
ensures that the Metropolis kernel $M$ is reversible with respect to $\mathcal{P}$.

Before proceeding with the proof, we express \cref{mh ratio} in a simple, easy-to-compute form.
The Jacobian is $\frac{\partial \overline{\Z}_{0:L}}{\partial \Z_{0:L}} = \sigma \otimes \frac{\partial \overline{\Z}}{\partial \Z}$, where $\sigma$ is the matrix of the permutation $\sigma(i) = L - i$ and $\frac{\partial \overline{\Z}}{\partial \Z} = I_{d\times d} \oplus - I_{d-1 \times d-1}$. Both of these matrices have determinant $\pm 1$, so the determinant of their Kronecker product is also $\pm 1$ and its absolute value is 1.

We get
\begin{align}
    e^{- \Delta(\Z_{0:L})} &= \frac{\mathcal{P}(\overline{\Z}_{0:L})}{\mathcal{P}(\Z_{0:L})} = \frac{p(\overline{\Z}_L) \prod_{i=1}^L q(\overline{\Z}_{i-1} \vert \overline{\Z}_{i})}{p(\Z_0) \prod_{i=1}^L q(\Z_{i} \vert \Z_{i-1})} = \frac{\prod_{i=1}^L p(\overline{\Z}_i) \prod_{i=1}^L q(\overline{\Z}_{i-1} \vert \overline{\Z}_{i})}{\prod_{i=1}^L p(\Z_{i-1}) \prod_{i=1}^L q(\Z_{i} \vert \Z_{i-1})} 
    \\ \nonumber
    &= \prod_{i=1}^L \frac{ q(\overline{\Z}_{i-1} \vert \overline{\Z}_{i}) p(\overline{\Z}_i)}{q(\Z_{i} \vert \Z_{i-1}) p(\Z_{i-1})}  = e^{- \sum_{i = 1}^L \Delta(\Z_i, \Z_{i-1}) },
\end{align}
where $\Delta(\Z_i, \Z_{i-1})$ is the energy error in step $i$, by \cref{theorem: one step}.

We are now in a position to define the trajectory-space kernel:
\begin{equation}
    \mathcal{Q} = \mathcal{G} M \mathcal{G}.
\end{equation}

The palindromic structure of $\mathcal{Q}$ ensures reversibility with respect to $\mathcal{P}$. Since the transition $\mathcal{G}(\cdot \vert \Z_{0:L}) = \mathcal{G}(\cdot \vert \x_0)$ only depends on the starting position $\x_0 \in \mathbb{R}^d$ and $p(\x)$ is the marginal of $\mathcal{P}$, we obtain that 
$q(\x'_0 \vert \x_0) = \int \mathcal{Q}(\Z'_{0:L}\vert \Z_{0:L}) d \U_0 \prod_{i=1}^L d\Z_i$
defines marginally a Markov kernel on $\mathbb{R}^d$, reversible with respect to $p$. 
In particular, the distribution of $\{ \x_i\}_{i \geq 0}$ in Algorithm 1 coincides with the distribution of a Markov chain generated by $q$.
\end{proof}

\begin{algorithm}[tb]
\caption{MAMS - Langevin}
\begin{algorithmic}
\label{alg: malt}
    \STATE {\bfseries Input:}
    \STATE negative log-density function $\mathcal{L}: \mathbb{R}^d \xrightarrow[]{} \mathbb{R}$,
    initial condition $\x^0 \in \mathbb{R}^d$,
    number of samples $N > 0$,
    step size $\epsilon > 0$,
    steps per sample $L/\epsilon \in \mathbb{N}$,
    partial refreshment parameter $L_{\mathrm{partial}}$. The last three parameters can be determined automatically as in Section \ref{sec: adaptation}.

    \STATE {\bfseries Returns:}
    samples $\{ \x^n \}_{n = 1}^N$ from $p(\x) \propto e^{-\mathcal{L}(\x)}$.

    \FOR{$I\gets0$ {\bf to} $N$}{
        \STATE $\U \sim \mathcal{U}_{S^{d-1}}$

        \STATE$\Z^0 \gets (\x^I, \U)$
        \STATE $\delta \gets 0$
        
        \FOR{$i\gets0$ {\bf to} $n$}{
            \STATE $\Z \gets O_{\epsilon}(\Z_i)$
            \STATE $\Z' \gets \Phi_{\epsilon}(\Z)$
            \STATE $\Z^{i+1} \gets O_{\epsilon}(\Z')$
            \STATE $\delta \gets \delta + \Delta(\Z', \Z)$           
        }\ENDFOR

        \STATE draw a random uniform variable $U \sim \mathcal{U}(0,1)$
        \IF{$U < e^{-\delta}$}{
            \STATE $\x^{I+1} \gets \Z^{n-1}[0]$
        }
        \ELSE{
            \STATE $\x^{I+1} \gets \Z^{0}[0]$
        }\ENDIF
        
    }\ENDFOR
    
\end{algorithmic}
\end{algorithm}

\section{Microcanonical dynamics} \label{appendix:microcanonical}

In this appendix, we establish a relationship between the microcanonical dynamics of \cref{eq: MCHMC dynamics} and a Hamiltonian system with energy $E$ from which it can be derived by a time-rescaling operation. As well as motivating the dynamics of \cref{eq: MCHMC dynamics}, this allows us to show that
$W$ in \cref{lemma: energy} for the dynamics of \cref{eq: MCHMC dynamics} corresponds to the change in energy $E$ of the Hamiltonian system. We also provide a complete derivation of the form of $W$ for microcanonical dynamics. Familiarity with the basics of Hamiltonian mechanics is assumed throughout.

\subsection{Sundman transformation}

We begin by introducing a transformation to a Hamiltonian system known as a Sundman transform \citep{leimkuhler2004simulating}

$$
S(F)(\Z(t)) = w(\Z(t))F(\Z(t)) ,
$$

where $w$ is any function $\mathbb{R}^{2d}\to \mathbb{R}$. Intuitively, this is a $\Z$-dependent time rescaling of the dynamics. Therefore it is not surprising that:
\begin{lemma}
    The integral curves of $S(F)$ are the same as of $F$ \citep{skeel2009makes}
\end{lemma}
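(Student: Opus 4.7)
The plan is to explicitly construct a time reparametrization that maps integral curves of $F$ to integral curves of $S(F)$ and vice versa. Given an integral curve $\Z(t)$ of $F$, so that $\dot{\Z}(t) = F(\Z(t))$, I would introduce a new time parameter $s$ via the ODE $dt/ds = w(\tilde{\Z}(s))$ with $t(0) = 0$, where $\tilde{\Z}(s) := \Z(t(s))$. The chain rule then yields
\[
\tilde{\Z}'(s) = \dot{\Z}(t(s))\, t'(s) = F(\tilde{\Z}(s))\, w(\tilde{\Z}(s)) = S(F)(\tilde{\Z}(s)),
\]
so $\tilde{\Z}$ is an integral curve of $S(F)$ whose image in phase space coincides with that of $\Z$. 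The parametrization changes, but the curve as a subset of $\mathbb{R}^{2d}$ does not.

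For the converse direction, given an integral curve $\tilde{\Z}(s)$ of $S(F)$, I would define $t(s) = \int_0^s w(\tilde{\Z}(s'))\, ds'$ and set $\Z(t) = \tilde{\Z}(s(t))$, where $s(t)$ is the inverse of $t(s)$. The analogous chain rule computation recovers $\dot{\Z}(t) = F(\Z(t))$, exhibiting $\Z$ as an integral curve of $F$ that traces out the same image as $\tilde{\Z}$. Together these two constructions establish a bijection between the integral curves of $F$ and those of $S(F)$ as unparametrized curves.

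The main (and essentially only) obstacle is ensuring that the time reparametrization is well-defined and invertible along the curve. This requires $w$ to be nonvanishing and of constant sign; the natural assumption is $w > 0$, which makes $t(s)$ strictly increasing and guarantees a smooth inverse $s(t)$. Standard ODE existence and uniqueness (applied to the scalar ODE for $t(s)$) then supplies a local, and under mild boundedness conditions on $w$ a global, reparametrization. Since the statement of the lemma is about curves rather than parametrized trajectories, these analytic caveats do not affect the conclusion once $w > 0$ is assumed, which is the regime of interest when $S$ is used as a Sundman transform in the sequel.
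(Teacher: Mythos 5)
Your proof is correct and takes essentially the same approach as the paper: both define the reparametrization via $dt/ds = w(\Z(s))$ and apply the chain rule to show the reparametrized curve of $F$ is an integral curve of $S(F)$. You are somewhat more careful than the paper, adding the converse direction and the necessary positivity assumption on $w$ (which does hold for the paper's choice $w = \|\boldsymbol{p}\|/(d-1)$), but the core argument is identical.
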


\begin{proof}

To see this, first use $\Z_{G}$ to refer to the dynamics from a field $G$, and posit that $\Z_{S(F)}(s) = \Z_F(t(s))$, where $\frac{dt(s)}{ds} = w(\Z(s))$. Then we see that

$$
\frac{d\Z_{S(F)}(s)}{dt} = \frac{d\Z_F(t(s))}{ds} = \frac{d\Z_F(t)}{dt}\frac{dt}{ds} = F(\Z_{F}(s))w(\Z_{F}(s)),
$$
    
which shows that, indeed, $\Z_{S(F)} = \Z_F \circ s$, where $s$ is a function $\mathbb{R} \to \mathbb{R}$, which amounts to what we set out to show.

\end{proof}


However, note that the stationary distribution is not necessarily preserved, on account of the phase space dependence of the time-rescaling, which means that in a volume of phase space, different particles will move at different velocities.

\subsection{Obtaining the dynamics of \cref{eq: MCHMC dynamics}} \label{appendix:dynamics}

Consider the Hamiltonian system\footnote{Here we follow \cite{robnik_microcanonical_2024} and \cite{steeg_hamiltonian_2021}, but our Hamiltonian differs by a factor, to avoid the need for a weighting scheme used in those papers.} given by $H = T + V$, with $T(\p) = (d-1)\log \left\| \p \right\|$ and $V(\x) = \mathcal{L}(\x)$. Here, $\p$ is the canonical momentum associated to position $\x$. Then the dynamics derived from Hamilton's equations of motion are:

\begin{equation}
\label{eq:microcanonical:original}
\frac{d}{dt}\begin{bmatrix} \x \\ \p \end{bmatrix} 
= \begin{bmatrix} \frac{\partial H}{\partial \p} \\  -\frac{\partial H}{\partial \x} \end{bmatrix}
= \begin{bmatrix} (d-1)\frac{\p}{\left\| \p \right\|^2} \\ -\nabla_{\x} \mathcal{L}(\x) \end{bmatrix}
:= F(\Z).
\end{equation}

Any Hamiltonian dynamics has $p(\Z) \propto \delta(H-C)$ as a stationary distribution, which can be sampled from by integrating the equations if ergodicity holds. As observed in \cite{ESH} and \cite{robnik_microcanonical_2024}, the closely related Hamiltonian $d\log\left\| \p \right\| +\mathcal{L}(\x)$ has the property that the marginal of this stationary distribution is the desired target, namely $p(\x) \propto e^{-\mathcal{L}(\x)}$. However, numerical integration of these equations is unstable due to the $\frac{1}{\left\| \p \right\|^2}$ factor, and moreover, MH adjustment is not possible since numerical integration induces error in $H$, which would result in proposals always being rejected, due to the delta function.

Both problems can be addressed with a Sundman transform and a subsequent change of variables. To that end, we choose $w(\Z) = \left\| \p \right\|/(d-1)$ (which corresponds, up to a factor, to the weight $r$ in \cite{ESH}, and to $w$ in \cite{robnik_microcanonical_2024}), we obtain:

\begin{equation}
\label{eq:rescaled:xp}    
\frac{d}{dt}\begin{bmatrix} \x \\ \p \end{bmatrix} = \begin{bmatrix} \p/\left\| \p \right\| \\ -\nabla \mathcal{L}(\x)\left\| \p \right\|/(d-1) \end{bmatrix} .
\end{equation}

Changing variables to $\U = \p/\left\| \p \right\|$, we obtain precisely the microcanonical dynamics of \cref{eq: MCHMC dynamics}:

$$
\frac{d}{dt}\begin{bmatrix} \x \\ \U \end{bmatrix}  
= \begin{bmatrix} \U \\ -(I-\U\U^T)\nabla \mathcal{L}(\x)/(d-1) \end{bmatrix}
:= \begin{bmatrix} B_{\x} \\ B_{\U} \end{bmatrix},
$$

where we have used that the Jacobian $\frac{d\U}{d\p} = \frac{1}{\left\| \p \right\|}(I - \frac{\p\p^T}{\left\| \p \right\|^2})$. Note that $B_{\x} = S(F)_{\x}$, since this final change of variable only targets $\p$.

\subsection{Discrete updates} \label{appendix:updates}

For completeness, we here state the position and velocity updates of the canonical and microcanonical dynamics, which are obtained by solving dynamics at fixed velocity for the position update and at fixed position for the velocity update. For canonical dynamics, this amounts to solving
\begin{equation}
    \frac{d}{d\epsilon} A_{\epsilon} = \U(t)  \qquad \frac{d}{d\epsilon} B_{\epsilon} = - \nabla \mathcal{L}(\x(t)),
\end{equation}
with initial condition $A_0 = \x(t)$ and $B_0 = \U(t)$. These solution is trivial:
\begin{equation}
    A_{\epsilon} = \x(t) + \epsilon \U(t) \qquad B_{\epsilon} = \U(t) - \epsilon \nabla \mathcal{L}(\x(t)).
\end{equation}

For microcanonical dynamics, one needs to solve
\begin{equation}
    \frac{d}{d\epsilon} A_{\epsilon} = \U(t)  \qquad \frac{d}{d\epsilon} B_{\epsilon} = - (1 - \U(t) \U(t)^T) \nabla \mathcal{L}(\x(t)) / (d-1),
\end{equation}
with initial condition $A_0 = \x(t)$ and $B_0 = \U(t)$. The velocity equation is a vector version of the Riccati equation \citep{ESH}. Denote $\boldsymbol{g} = - \nabla \mathcal{L}(\x(t)) / (d-1)$ and replace the variable $B_{\epsilon}$ by $\boldsymbol{y}_{\epsilon}$, such that
\begin{equation}
    B_{\epsilon} = \frac{\frac{d}{d\epsilon} \boldsymbol{y}_{\epsilon}}{\boldsymbol{g} \cdot \boldsymbol{y}_{\epsilon}}.
\end{equation}
This is convenient, because the equation for $B_{\epsilon}$ is a nonlinear first-order differential equation, but the equation for $\boldsymbol{y}_{\epsilon}$ is a \emph{linear} second-order differential equation
\begin{equation}
    \frac{d^2}{d \epsilon^2} \boldsymbol{y}_{\epsilon} = (\boldsymbol{g} \boldsymbol{g}^T) \boldsymbol{y}_{\epsilon},
\end{equation}
which is easy to solve and yields the updates
\begin{equation} \label{eq: B MCHMC}
    A_{\epsilon} = \x(t) + \epsilon \U(t) \qquad B_{\epsilon} = \frac{\U(t) + (\sinh{\delta}+ \boldsymbol{e} \cdot \U(t) (\cosh \delta -1)) \boldsymbol{e} }{\cosh{\delta} + \boldsymbol{e} \cdot \U(t) \sinh{\delta}},
\end{equation}
where $\delta = \epsilon \left\| \nabla \mathcal{L}(\x(t)) \right\| / (d-1)$ and $ \boldsymbol{e} = - \nabla \mathcal{L}(\x) / \left\| \nabla \mathcal{L}(\x) \right\|$.

\subsection{Obtaining the stationary distribution of \cref{eq: MCHMC dynamics}} \label{appendix:stationary}

We can derive the stationary distribution of \cref{eq: MCHMC dynamics} following the approach of \cite{tuckerman2023statistical}. There, it is shown that for a flow $F$, if there is a $g$ such that  $\frac{d}{dt}\log g= -\nabla\cdot F$, and $\Lambda$ is the conserved quantity under the dynamics, then $p(\Z) \propto g(\Z)f(\Lambda(\Z))$, where $f$ is any function.

We note that $\nabla\cdot F = \U\cdot \nabla \mathcal{L}(x) = \frac{d}{dt} \mathcal{L}(\x)$, using \cref{appendix:divergence} in the first step. Therefore $\log g = - \mathcal{L}(\x)$.
Further, $|\U|$ is preserved by the dynamics if we initialize with $|\U_0|=1$, as can easily be seen: $\frac{d}{dt}(\U \cdot \U) = 2\U\cdot \dot \U = 2\U\cdot (I - \U\U^T)(-\nabla \mathcal{L}(\x)/(d-1)) = 2(1-\U\cdot \U)(\U \cdot -\nabla \mathcal{L}/(d-1)) = 0$. Thus a stationary distribution is:
\begin{equation}
\label{eq:stationary:iso}
p(\x,\U) \propto e^{-{\mathcal{L}(\x)}}\delta(\left\|u\right\|-1).
\end{equation}

Importantly, because even the discretized dynamics are norm preserving, the condition $\delta(|\U|-1)$ is always satisfied, so that $\frac{p(\Z')}{p(\Z)}$ is always well defined. This makes it possible to perform MH adjustment, in contrast to the original Hamiltonian dynamics as discussed in \cref{appendix:dynamics}.

\subsection{$W$ as energy change} \label{appendix:WE}

In the non-equilibrium physics literature, $W$ (termed the dissipation function) is interpreted as work done on the system and the second term in \cref{work} is the dissipated heat \citep{evans_equilibrium_1994, evans_fluctuation_2002, sevick_fluctuation_2008}. $W$ plays a central role in fluctuation theorems, for example, Crook's relation \citep{crooks_entropy_1999} states that the transitions $\Z \xrightarrow[]{} \Z'$ are more probable than $\Z' \xrightarrow[]{} \Z$ by a factor $e^{W(\Z', \Z)}$.
In statistics, this fact is used by the MH algorithm to obtain reversibility, or \emph{detailed balance}, a sufficient condition for convergence to the target distribution. 

Here we will justify why it can also be interpreted as an energy change in microcanonical dynamics.

\begin{lemma} \label{lemma:energy2}
    $W$, calculated for the microcanonical dynamics over a time interval $[0, T]$ is equal to $\Delta E$ of the Hamiltonian $(d-1)\log\left\| \p \right\| + \mathcal{L}(\x)$ for an interval $[s(0), s(T)]$, where $s$ is the time rescaling arising from the Sundman transformation $w(\Z) = |\p_F|/(d-1)$.
\end{lemma}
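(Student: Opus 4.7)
The plan is to compute the two terms of $W$ from \cref{work} along the continuous microcanonical flow and identify them, respectively, with the change in potential energy $\Delta V$ and the change in kinetic energy $\Delta T$ of the Hamiltonian system \cref{eq:microcanonical:original}, whose energy is $E = T + V$ with $T = (d-1)\log\|\p\|$ and $V = \mathcal{L}(\x)$. Once both identifications are in place, $W = \Delta V + \Delta T = \Delta E$ follows directly, and both sides of the claimed equality vanish by conservation of $E$ along the Hamiltonian flow.

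First I would handle $-\log(p(\Z(T))/p(\Z(0)))$. Because the continuous microcanonical flow preserves the unit-norm constraint $\|\U\|=1$ (see \cref{appendix:stationary}), the delta-function factor in $p(\x,\U) \propto e^{-\mathcal{L}(\x)}\delta(\|\U\|-1)$ is inert along the trajectory, and the term reduces to $\mathcal{L}(\x(T)) - \mathcal{L}(\x(0)) = \Delta V$. Next I would evaluate the divergence integral using $\nabla\cdot F = \U\cdot\nabla\mathcal{L}$ on the constraint manifold, as shown in the proof of \cref{lemma: energy} and re-derived in \cref{appendix:divergence}. Since $\dot\x = \U$ along the flow, the integrand is a total derivative $d\mathcal{L}/ds$ and the integral equals $\Delta V$, so this term contributes $-\Delta V$ to $W$.

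The final step is to invoke the Sundman correspondence of \cref{appendix:dynamics}: the microcanonical trajectory in $(\x,\U)$ is a time reparameterization of a Hamiltonian trajectory in $(\x,\p)$ on which $E = T + V$ is exactly conserved, so $-\Delta V = \Delta T$ for the corresponding Hamiltonian interval $[s(0), s(T)]$. Combining the two pieces, $W = \Delta V + \Delta T = \Delta E$, matching the right-hand side of the lemma.

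I do not expect any serious obstacle. The only mildly subtle point is making the delta-function computation rigorous, which goes through because $\|\U\|$ is exactly conserved by the continuous flow, so $p(\Z(T))/p(\Z(0))$ is well-defined as a ratio of densities on the constraint manifold. The real payoff of the lemma is the decomposition $W = \Delta V + \Delta T$, which cleanly mirrors the discrete decomposition of \cref{lemma: energy} via \cref{eq:deltaV} and \cref{eq:deltaK}, and thereby justifies interpreting the discrete $W$ as a numerical energy error of the associated Hamiltonian system.
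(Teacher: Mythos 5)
Your proof is correct for the lemma as stated, and it follows the paper's overall strategy: split $W$ into the two terms of \cref{work}, identify the density-ratio term with $\Delta V$, and identify the divergence integral with the kinetic-energy change. The one place you genuinely diverge is in \emph{how} the divergence integral is identified with $\Delta T$. You use $\dot{\x}=\U$ to write $\nabla\cdot F=\U\cdot\nabla\mathcal{L}=\tfrac{d}{dt}\mathcal{L}(\x(t))$, so the integral telescopes to $\Delta V$ and the term contributes $-\Delta V$; you then invoke conservation of $E$ along the Hamiltonian flow to rebrand $-\Delta V$ as $\Delta T$. The paper instead shows $\tfrac{d}{ds}K(\p(t(s)))=-\nabla\cdot F$ pointwise, by the chain rule on $K(\p)$ together with Hamilton's equation for $\dot{\p}$, without ever using the position equation. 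For the full continuous flow over $[0,T]$ the two arguments are equivalent (and, as you correctly note, both sides of the equality are then zero). The difference matters only for what the lemma is used for afterwards: the subsequent discussion applies the decomposition update-by-update to the Verlet splitting, and for the velocity-only update $B_{\U}$ the position is frozen, so $\Delta V=0$ while the divergence integral equals the nonzero $\Delta K$ of \eqref{eq:deltaK}; your identification of that integral with $-\Delta V$ does not survive this split, whereas the paper's direct identification with $dK/ds$ does. So your argument proves the stated lemma, but to get the per-update energy bookkeeping that you rightly identify as the real payoff, you should establish $-\nabla\cdot F_B=dK/ds$ directly rather than via $\Delta T=-\Delta V$.
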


\begin{proof}
    
Recall that for a flow field $F$:

\begin{equation}
\label{eq:work_repeat}
    W(\Z_F(T), \Z_F(0)) = -\log \frac{p(\Z_F(T))}{p(\Z_F(0))} - \int_{0}^T \nabla \cdot F(\Z_F(s)) ds.
\end{equation}





Given the form of the stationary distribution induced by $B$, derived in \cref{appendix:stationary}, we see that the first term of the work, $\log \frac{P(\Z_B(0))}{P(\Z_B(T))} =  \mathcal{L}(\x_B(T)) - \mathcal{L}(\x_B(0)) = \mathcal{L}(\x_{S(F)}(T)) - \mathcal{L}(\x_{S(F)}(0)) = \mathcal{L}(\x_{F}(s(T))) - \mathcal{L}(\x_{F}(s(0)))$ which is equal to $\Delta V$ for an interval of time $[s(0), s(T)]$.

As for the second term, observe that
\[ 
\frac{dK(\p(t(s)))}{ds} = \frac{\partial H}{\partial \p} \cdot \frac{d \p}{dt} \frac{dt}{ds} = \frac{d \x}{d t}\frac{dt}{ds}  \cdot \frac{d\p}{dt} = \U \cdot (- \nabla \mathcal{L}(\x)) = - \nabla \cdot B, 
\]
which is precisely the integrand of the second term.



\end{proof}

This shows that $W = \Delta K + \Delta V = \Delta E$, where $\Delta E$ is the energy change of the original Hamiltonian, over the rescaled time interval $[s(0), s(T)]$ . As we know, $\Delta E = 0$ for the exact Hamiltonian flow, and indeed $W=0$ for the exact dynamics of \cref{eq: MCHMC dynamics}, which is to say that for the exact dynamics, no MH correction would be needed for an asymptotically unbiased sampler.

However, our practical interest is in the discretized dynamics arising from a Velocity Verlet numerical integrator. In this case, we wish to calculate $W$ for $B_{\U}$ and $B_{\x}$ separately, and consider the sum, noting that $W$ is an additive quantity with respect to the concatenation of two dynamics.
Considering $W$ with respect to only $B_{\x}$, we see that the first term of $W$ remains $\Delta V$, since the stationary distribution gives uniform weight to all values of $\U$ of unit norm, and the dynamics are norm preserving. The second term vanishes, because $\nabla_{\x}B_{\x} = \nabla_{\x}\U = 0$.
As for $B_{\U}$, since the norm preserving change in $\U$ leaves the density unchanged, the first term of $W$ vanishes. Meanwhile, the second term is $\Delta K$, from the above derivation, since $\nabla\cdot B = \nabla_{\x} \cdot B_{\x} + \nabla_{\U} \cdot B_{\U} = \nabla_{\U} \cdot B_{\U}$.
Thus, the full $W$ is equal to $\Delta V + \Delta K = \Delta E$, as desired.
For HMC, it is easily seen that $W$ for $F_{\x}$ is $\Delta V$, and for $F_{\U}$ is $\Delta T$.
Putting this together, we maintain the result of \cref{lemma:energy2}, but now in a setting where $W$ is not $0$ so that MH adjustment is of use.

\subsection{Direct calculation of velocity update $W$} \label{appendix:jacobian}

We here provide a self-contained derivation of the MH ratio for the velocity update from \cref{eq: B MCHMC}.
The MH ratio is a scalar with respect to state space transformations, i.e.~it is the same in all coordinate systems. We can therefore select convenient coordinates for its computation. We will choose spherical coordinates in which $\boldsymbol{e}$ is the north pole and
\begin{equation} \label{eq: coordinate system}
    \U = \cos \vartheta \boldsymbol{e} + \sin \vartheta \boldsymbol{f},
\end{equation}
for some unit vector $\boldsymbol{f}$, orthogonal to $\boldsymbol{e}$. $\vartheta$ is then a coordinate on the $S^{d-1}$ manifold. The velocity updating map from \cref{eq: B MCHMC},
\begin{equation}
    \U' = \frac{1}{\cosh \delta + \cos \vartheta \sinh \delta} \U + \frac{\sinh \delta + \cos \vartheta (\cosh \delta -1)}{\cosh \delta + \cos \vartheta \sinh \delta} \boldsymbol{e}
\end{equation}
\begin{equation} = \frac{\sinh \delta + \cos \vartheta \cosh \delta}{\cosh \delta + \cos \vartheta \sinh \delta} \boldsymbol{e} + \frac{\sin \vartheta}{\cosh \delta + \cos \vartheta \sinh \delta} \boldsymbol{f},
\end{equation}
can be expressed in terms of the $\vartheta$ variable:
\begin{equation}
    \cos \vartheta' = \frac{\sinh \delta + \cos \vartheta \cosh \delta}{\cosh \delta + \cos \vartheta \sinh \delta} \qquad \sin \vartheta' = \frac{\sin \vartheta}{\cosh \delta + \cos \vartheta \sinh \delta}.
\end{equation}
The Jacobian of the $\vartheta \mapsto \vartheta'$ transformation is
\begin{equation}
    \bigg{\vert} \frac{d \vartheta'}{d \vartheta} \bigg{\vert} = \bigg{\vert}\frac{d \vartheta'}{d \cos \vartheta'} \frac{d \cos \vartheta'}{d \vartheta}\bigg{\vert} =  \frac{1}{\vert\cosh \delta + \cos \vartheta \sinh \delta \vert}
\end{equation}
and the density ratio is
\begin{equation}
    \frac{p(\vartheta')}{p(\vartheta)} = \sqrt{\frac{g(\vartheta')}{g(\vartheta)}} = \bigg(\frac{\sin \vartheta'}{\sin \vartheta}\bigg)^{d-2} = \frac{1}{(\cosh \delta + \cos \vartheta \sinh \delta)^{d-2}},
\end{equation}
where $g$ is the metric determinant on a $S^{d-1}$ sphere.
Combining the two together yields 
\begin{equation}
   W = (d-1) \log \big( \cosh{\delta} + \cos \vartheta \sinh{\delta} \big),
\end{equation}
which is the kinetic energy from Equation \eqref{eq:deltaK}.

\subsection{Direct calculation of the velocity update divergence} \label{appendix:divergence}
For completeness, we here derive the divergence of the microcanonical velocity update flow field $F$. We will use the divergence theorem, which states that the integral of the divergence of a vector field over some volume $\Omega$ equals the flux of this vector field over the boundary of $\Omega$. Here, flux is $F \cdot n$ where $n$ is the unit vector, normal to the boundary. 

We will use the coordinate system defined in \cref{eq: coordinate system} and pick as the volume $\Omega$ a thin spherical shell, centered around the north pole $\boldsymbol{e}$ and spanning the $\vartheta$ range $[\vartheta, \vartheta + \Delta \vartheta]$. The boundary of $\Omega$ are two spheres in $d-2$ dimensions with radia $\sin \vartheta$ and $\sin (\vartheta + \Delta \vartheta)$. Note that $F$ is normal to this boundary and flux is a constant on each shell. It is outflowing on the boundary which is closer to the north pole and inflowing on the other boundary.

Note that for $\Delta \vartheta \xrightarrow[]{} 0$, we have that $\nabla \cdot F$ is a constant on $\Omega$. The divergence theorem in this limit therefore implies
\begin{equation}
    (\nabla \cdot F) \, V(S^{d-2}) (\sin \vartheta)^{d-2} = -\frac{d}{d \vartheta} \big( \vert F \vert V(S^{d-2}) (\sin \vartheta)^{d-2}  \big),
\end{equation}
where $V(S^{d-2})$ is the volume of the unit sphere in d-2 dimensions and we have used that the volume of a $n$-dimensional sphere with radius $r$ is $V(S^{n}) r^n$.
By rearranging we get:
\begin{equation} \label{eq: gauss theorem}
    \nabla \cdot F = - \frac{\frac{d}{d \vartheta} \big( \vert F \vert (\sin \vartheta)^{d-2}  \big)}{(\sin \vartheta)^{d-2} }.
\end{equation}

We have
\begin{equation}
    F = \frac{\left\| \nabla \mathcal{L}(\x) \right\|}{d-1} (1 - \U \U^T) \boldsymbol{e},
\end{equation}
so 
\begin{equation}
    \vert F \vert = \frac{\left\| \nabla \mathcal{L}(\x) \right\|}{d-1} \sqrt{\boldsymbol{e} (1 - \U \U^T) \boldsymbol{e}} = \frac{\left\| \nabla \mathcal{L}(\x) \right\|}{d-1} \sqrt{1 - (\boldsymbol{e} \cdot \U)^2} = \frac{\left\| \nabla \mathcal{L}(\x) \right\|}{d-1} \sin \vartheta.
\end{equation}
Inserting $\vert F \vert$ in \cref{eq: gauss theorem} yields
\begin{equation}
    \nabla \cdot F = - \frac{\left\| \nabla \mathcal{L}(\x) \right\|}{d-1}  \frac{(d-1) (\sin \vartheta)^{d-2} \cos \vartheta }{(\sin \vartheta)^{d-2}}  = - \vert \left\| \nabla \mathcal{L} \right\| \boldsymbol{e} \cdot \U.
\end{equation}

\section{Optimal acceptance rate} \label{sec: neal}
The optimal acceptance rate argument for MAMS is analogous to the one in \cite{NealHandbook}. We will use two general properties of the deterministic MH proposal:
\begin{enumerate} 
    \item The expected value of the MH ratio under the stationary distribution, $ \mathbb{E}_{z \sim p}[e^{-W(z', z)}]$, is
    \begin{align*}
       \int p(\Z) e^{- W(\Z', \Z)} d \Z = \int p(\Z) \frac{q(\Z \vert \Z') p(\Z')}{q(\Z' \vert \Z) p(\Z)} d \Z = \int p(\Z') \big{\vert} \frac{\partial \varphi}{\partial \Z}(\Z) \big{\vert} d \Z = \int p(\varphi(\Z)) d \varphi(\Z) = 1.
    \end{align*}
    This is the Jarzynski equality. In statistical literature it was used by \cite{HMCneal, creutz1988global} in the special case when $\varphi$ is symplectic.
    \item In equilibrium, 
    \[
    P(W > 0 \vert \mathrm{accepted}) = P(W < 0 \vert \mathrm{accepted}) = \frac{1}{2}
    \]
    by the design of the MH algorithm \citep{NealHandbook}. Since $P(\mathrm{accepted} | W < 0) = 1$,  we have that 
    \begin{align*}
    \frac{1}{2} = P(w < 0 | \mathrm{accepted}) = \frac{P(W < 0 | \mathrm{accepted})P(W < 0)}{P(\mathrm{accepted})} = \frac{P(W < 0)}{P(\mathrm{accepted})},    
    \end{align*}
    so $P(\mathrm{accepted}) = 2 P(W < 0)$.
\end{enumerate}

Let us approximate the stationary distribution over $W$ as $\mathcal{N}(\mu, \sigma^2)$, as in \cite{neal_mcmc_2011}. We then have by the Jarzynski equality:
\begin{align}
    1 = \int p(\Z) e^{- W(\Z', \Z)} d \Z = \int_{- \infty}^{\infty} \frac{1}{\sqrt{2 \pi \sigma^2}} e^{-(W- \mu)^2 / 2\sigma^2} e^{-W} d W = e^{\frac{\sigma^2}{2} - \mu},
\end{align}
implying that $\sigma^2 = 2 \mu$. 
By property (2) we then have
\begin{equation}
    P(\mathrm{accept}) =  2 \Phi(-\mu / \sqrt{2 \mu}),
\end{equation}
where $\Phi$ is the Gaussian cumulative density function.
Denote by $K_{\mathrm{accepted}}$ the number of accepted proposals needed for a new effective sample. This corresponds to moving a distance on the order of the size of the typical set
\begin{equation}
    K_{\mathrm{accepted}} N \epsilon \propto \sqrt{d},
\end{equation}
since $\sqrt{d}$ is the size of the standard Gaussian's typical set. The number of effective samples per gradient call is then
\begin{equation}
    \mathit{ESS} = \frac{1}{K_{\mathrm{total}} N} = \frac{P(\mathrm{accept})}{K_{\mathrm{accepted}} N} \propto \frac{\epsilon P(\mathrm{accept})}{\sqrt{d}}.
\end{equation}
The error of the MCHMC Velocity Verlet integrator for an interval of fixed length is \citep{robnik2024controlling}
\begin{equation}\label{vare}
    \sigma^2 / d \propto \epsilon^4 / d^2,
\end{equation}
implying that $\sigma^2  = 2 \mu \propto \epsilon^4 / d$. Therefore
\begin{equation}
    ESS \propto \mu^{1/4} \, \Phi(-\sqrt{\mu/2}) \, d^{-1/4} ,
\end{equation}
so we see that the efficiency drops as $d^{-1/4}$.
ESS is maximal at $\mu = 0.41$, corresponding to $P(\mathrm{accept}) = 65 \%$. From \cref{vare} we then see that the optimal stepsize grows as $\epsilon \propto d^{1/4}$ instead of the $d^{1/2}$ that would correspond to the unimpaired efficiency.

Note that this result is different if a higher-order integrator is used. For example, when using a fourth order integrator $\sigma^2 / d \propto (\epsilon^2/d)^4$, the optimal setting is $\mu = 0.13$ and $P(\mathrm{accept}) = 80 \%$.

Empirically, we find that even for a second-order integrator, targeting a higher acceptance rate, of $90 \%$, works well in practice; we use this in our experiments.

\section{Optimal rate of partial refreshments} \label{sec: Lpartial}

\begin{figure*}%
    \centering
    \includegraphics[scale = 0.44]{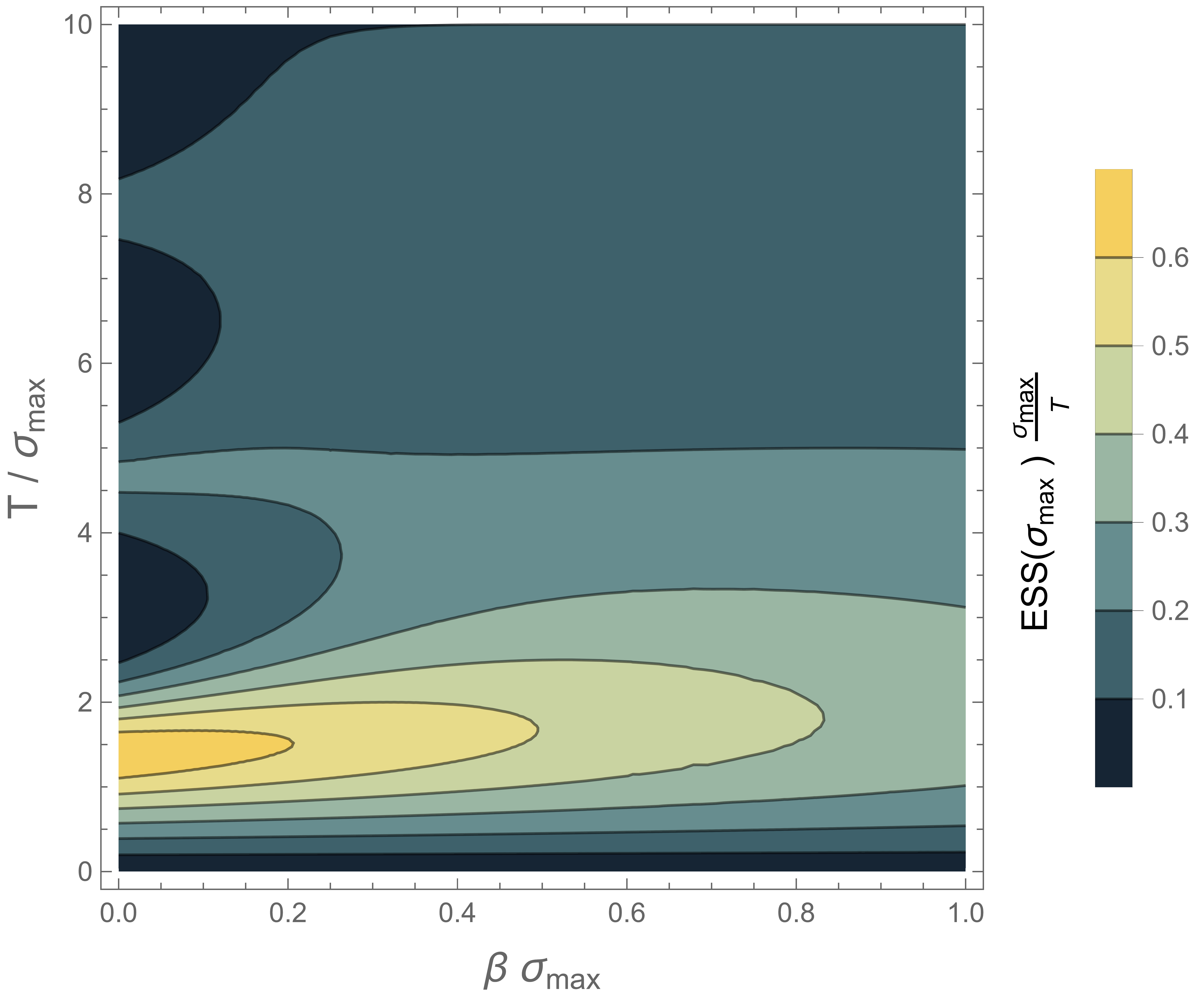}
    \includegraphics[scale = 0.44]{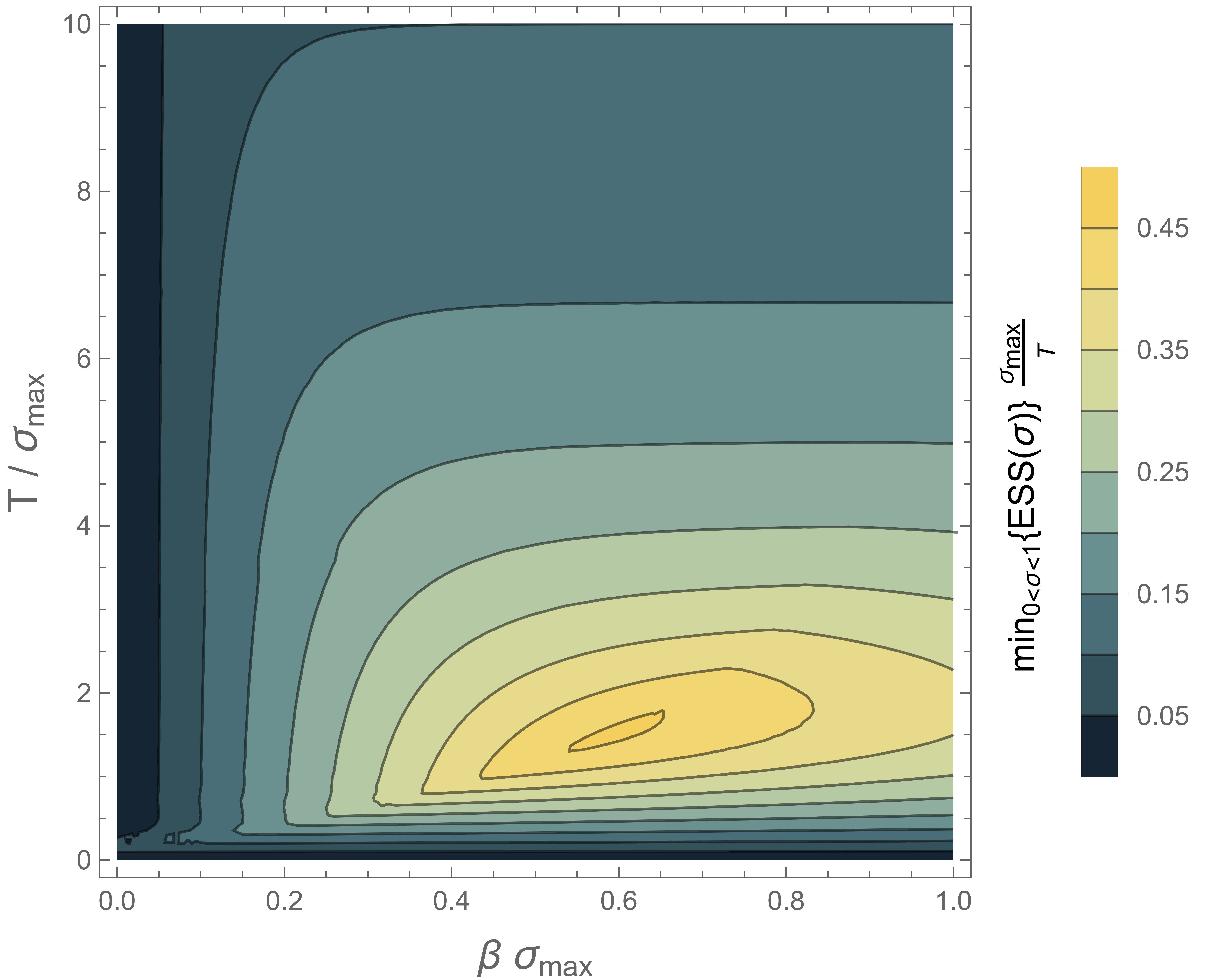}
    \caption{Effective sample size in continuous time for MALT LMC on Gaussian targets. x-axis is the LMC damping parameter, y-axis the trajectory length. $x = 0$ is the HMC line ($\beta = 0$ means there is no damping and no Langevin noise), $x = 1$ is LMC with critical damping $\beta = 1/\sigma_{\mathrm{max}}$.
    Left panel: isotropic Gaussian $\mathcal{N}(0, \sigma_{\max})$. Note that HMC achieves the optimal performance if properly tuned, the only reason to introduce Langevin noise would be to potentially make the tuning easier.
    Right panel: extremely ill-conditional Gaussian with all scales $(0, \sigma_{max}]$. ESS along the worst direction is shown. HMC performs poorly as it cannot be tuned to all scales, damping of $\beta \sigma_{max} = 0.57$ performs best.
    Note that these results do not imply MALT having non-zero ESS in the infinite condition number limit: we only study continuous time MALT here.
    }
    \label{fig: continuous MALT}%
\end{figure*}

We will here demonstrate the sensitivity of HMC to trajectory length for ill-conditioned Gaussian distributions and show how Langevin dynamics alleviates this issue. We will then obtain the optimal rate of partial refreshments in the limit of large condition number.
For MALT, \cite{MALT} derive the ESS of the second moments in the continuous-time limit for the Gaussian targets $\mathcal{N}(0, \sigma)$:
\begin{equation}
    ESS(\beta, T) = \frac{1 - \rho^2}{1 + \rho^2},
\end{equation}
where
\begin{equation}
    \rho = e^{- \beta T} \big( \cos{\omega T} + \frac{\beta}{\omega} \sin{\omega T}\big) \qquad \omega = \sqrt{\frac{1}{\sigma^2} - \beta^2}.
\end{equation}
$T$ is the trajectory length, $\beta$ is the LMC damping parameter ($\gamma = 2 \beta$ in \cite{MALT}). 

\cref{fig: continuous MALT} shows ESS as a function of the trajectory length and the rate of partial refreshments, both for the standard Gaussian and for the Gaussian in the limit of very large condition number. For the standard Gaussian, HMC achieves better performance than MALT, but only if trajectory length is chosen well. For ill-conditioned Gaussian however, HMC drastically underperforms compared to MALT. 
Optimal MALT hyperpatameter settings for the ill conditioned Gaussians are $\beta = 0.567$ and $T = 1.413$. Therefore the optimal ratio of the decoherence time scales of the partial and full refreshment are
\begin{equation}
 \frac{t_{\mathrm{partial}}}{t_{\mathrm{full}}} = \frac{1/\beta}{T} = 1.25.
\end{equation}
We will use the same setting for Langevin MAMS, so $L_{\mathrm{partial}} / L = 1.25$.

\section{Further experimental details} 

\subsection{Benchmark Inference Models}
\label{appendix:models}

We detail the inference models used in \cref{sec: experiments}. For models adapted from the Inference gym \citep{inferencegym} we give model's inference gym name in the parenthesis.
\begin{itemize}
    \item Gaussian is 100-dimensional with condition number 100 and eigenvalues uniformly spaced in log.

    \item Banana (\changefont{Banana}) is a two-dimensional, banana-shaped target.

    \item Bimodal: A mixture of two Gaussians in 50 dimensions, such that 
    \begin{equation*}
        p(\x) = (1-a) \mathcal{N}(\x \vert 0, I) + a \mathcal{N}(\x \vert \mu, \sigma^2 I),    
    \end{equation*}
    where $a = 0.25$, $\mu = (4, 0, \ldots 0)$ and $\sigma = 0.6$.
    
    \item Rosenbrock is a banana-shaped target in 36 dimensions. It is 18 copies of the Rosenbrock functions with Q = 0.1, see \citep{DLMC}.

    \item Cauchy is a product of 100 1D standard Cauchy distributions.
    
    \item Brownian Motion (\changefont{BrownianMotionUnknownScalesMissingMiddleObservations}) is a 32-dimensional hierarchical problem, where Brownian motion with unknown innovation noise is fitted to the noisy and partially missing data. 
    
    \item Sparse logistic regression (\changefont{GermanCreditNumericSparseLogisticRegression}) is a 51-dimensional Bayesian hierarchical model, where logistic regression is used to model the approval of the credit based on the information about the applicant. 
    
    \item Item Response theory (\changefont{SyntheticItemResponseTheory}) is a 501-dimensional hierarchical problem where students' ability is inferred, given the test results.
    
    \item Stochastic Volatility is a 2429-dimensional hierarchical non-Gaussian random walk fit to the S\&P500 returns data, adapted from numpyro \citep{NummPyro}
    
    \item Neal's funnel \citep{neal_mcmc_2011} is a funnel shaped target with a hierarchical parameter $z_1 \sim \mathcal{N}(0, 3)$ that controls the variance of the other parameters $z_i \sim \mathcal{N}(0, e^{z_1/2})$ for $i = 2, 3, \ldots d$. We take $d = 20$.
     
\end{itemize}

Ground truth expectation values $\expect{x^2}$ and $\mathrm{Var}[x^2] = \expect {(x^2 - \expect{x^2})^2}$ are computed analytically for the Ill Conditioned Gaussian, by generating exact samples for Banana, Rosenbrock and Neal's funnel and by very long NUTS runs for the other targets.

\subsection{Uncertainty in Table \ref{table}}

Table \ref{tab:bias_comparison} shows the relative uncertainty in the results from Table \ref{table}. Uncertainty is calculated by bootstrap: for a given model, we produce a set of chains (usually 128), and calculate the bias $b_\mathit{max}$ at each step of the chain. We then resample (with replacement) 100 times from this set, and compute our final metric (number of gradients to low bias) 100 times. We take the standard deviation of this list of length 100 to obtain an estimate of the error. This error, relative to the values in Table \ref{table} is then reported in percent.

\begin{table}
\caption{Relative uncertainty associated with Table \ref{table}.}
\label{tab:bias_comparison}
\begin{tabular}{llrr}
\toprule
 &  & MAMS & NUTS \\
Model & metric &  &  \\
\midrule
\textbf{Banana} & \textbf{$b_\mathit{max}$} & 0.42\% & 1.52\% \\
\cline{1-4}
\textbf{Bimodal Gaussian} & \textbf{$b_\mathit{max}$} & 6.34\% & 2.07\% \\
\cline{1-4}
\textbf{Brownian Motion} & \textbf{$b_\mathit{max}$} & 0.36\% & 0.22\% \\
\cline{1-4}
\textbf{Cauchy} & \textbf{$b_\mathit{avg}$} & 5.37\% & 0.02\% \\
\cline{1-4}
\textbf{German Credit} & \textbf{$b_\mathit{max}$} & 0.12\% & 0.09\% \\
\cline{1-4}
\textbf{Item Response} & \textbf{$b_\mathit{max}$} & 1.05\% & 0.17\% \\
\cline{1-4}
\textbf{Rosenbrock} & \textbf{$b_\mathit{avg}$} & 0.06\% & 0.02\% \\
\cline{1-4}
\textbf{Standard Gaussian} & \textbf{$b_\mathit{avg}$} & 1.04\% & 0.13\% \\
\cline{1-4}
\textbf{Stochastic Volatility} & \textbf{$b_\mathit{max}$} & 0.10\% & 0.09\% \\
\cline{1-4}
\bottomrule
\end{tabular}
\end{table}

\end{document}